\documentclass[dvips,aos]{imsart}

\usepackage{amsthm,amsmath,natbib}
\usepackage{mathrsfs}
\usepackage{amssymb}
\usepackage{stmaryrd}
\usepackage{dsfont}
\usepackage{enumerate}
\usepackage[dvips]{graphicx,color}
\RequirePackage[colorlinks,citecolor=blue,urlcolor=blue]{hyperref}

\allowdisplaybreaks
\numberwithin{equation}{section}

\startlocaldefs

\theoremstyle{theorem} \newtheorem{lemma}{Lemma}
\theoremstyle{theorem} \newtheorem{theorem}{Theorem}
\theoremstyle{theorem} 
\theoremstyle{theorem} \newtheorem{proposition}{Proposition}
\theoremstyle{definition} 
\theoremstyle{remark} 
\theoremstyle{remark} 
\endlocaldefs

\begin{document}

\begin{frontmatter}

\title{Efficient Calibration for Imperfect Computer Models}
\runtitle{Efficient Calibration}


\begin{aug}
\author{\fnms{Rui} \snm{Tuo}\thanksref{t1,m1}\ead[label=e1]{tuorui@amss.ac.cn}}
\and
\author{\fnms{C. F. Jeff} \snm{Wu}\thanksref{t2,m2}\ead[label=e2]{jeffwu@isye.gatech.edu}}

\runauthor{R. Tuo and C. F. J. Wu}
\thankstext{t1}{Tuo's research is partially sponsored by the Office of Advanced Scientific Computing Research; U.S. Department of Energy, project No. ERKJ259  ``A mathematical environment for quantifying uncertainty: Integrated and optimized at the extreme scale.'' The work was performed at the Oak Ridge National Laboratory, which is managed by UT-Battelle, LLC under Contract No. De-AC05-00OR22725. Tuo's work is also supported by the National Center for Mathematics and Interdisciplinary Sciences, CAS and NSFC 11271355.}

\thankstext{t2}{Wu's research is supported by NSF DMS-1308424 and DOE DE-SC0010548.}

\affiliation{Chinese Academy of Sciences \thanksmark{m1} and Georgia Institute of Technology \thanksmark{m2}}
\address{Academy of Mathematics and Systems Science\\Chinese Academy of Sciences\\Beijing, China 100190\\ \printead{e1}}
\address{School of Industrial and Systems Engineering\\Georgia Institute of Technology\\Atlanta, Georgia 30332-0205\\ \printead{e2}}
\end{aug}


\begin{abstract}
Many computer models contain unknown parameters which need to be estimated using physical observations. \cite{tuo2014calibration} shows that the calibration method based on Gaussian process models proposed by \cite{kennedy2001bayesian} may lead to unreasonable estimate for imperfect computer models. In this work, we extend their study to calibration problems with stochastic physical data. We propose a novel method, called the $L_2$ calibration, and show its semiparametric efficiency. The conventional method of the ordinary least squares is also studied. Theoretical analysis shows that it is consistent but not efficient. Numerical examples show that the proposed method outperforms the existing ones.
\end{abstract}

\begin{keyword}[class=AMS]
\kwd[Primary ]{62P30}
\kwd{62A01}
\kwd[; secondary ]{62F12}
\end{keyword}

\begin{keyword}
\kwd{Computer Experiments}
\kwd{Uncertainty Quantification}
\kwd{Semiparametric Efficiency}
\kwd{Reproducing Kernel Hilbert Space}
\end{keyword}

\end{frontmatter}


\section{Introduction}

Computer simulations are widely used by researchers and engineers to understand, predict, or control complex systems. Many physical phenomena  and processes can be modeled with mathematical tools, like partial differential equations. These mathematical models are solved by numerical algorithms like the finite element method. For example, computer simulations can help predict the trajectory of a storm. In engineering, computer simulations have become more popular and sometimes indispensable in product and process designs. The design and analysis of experiments is a classic area of statistics. A new field has emerged, which considers the design and analysis for experiments in computer simulations, commonly referred to as ``computer experiments''. Unlike the physical experiments, computer experiments are usually deterministic. In addition, the input variables for computer experiments usually take real values, not discrete levels as in many physical experiments. Therefore, interpolation methods are widely used in computer experiments, while conventional methods, like ANOVA or regression models, are used much less often.

In many computer experiments, some of the input parameters represent certain inherent attributes of the physical system. The true values of these variables are unknown because there may not be enough knowledge about the physical systems. For instance, in underground water simulations, the soil permeability is an important input parameter, but its true value is usually unknown. A standard approach to identify the unknown model parameters is known as \textit{calibration}. To calibrate the unknown parameters, one need to run the computer model under different model parameters, and run some physical experiments. The basic idea of calibration is to find the combination of the model parameters, under which the computer outputs match the physical responses.

One important topic in the calibration of computer models is to tackle the model uncertainty. Most physical models are built under certain assumptions or simplifications, which may not hold in reality. As a result, the computer output can rarely fit the physical response perfectly, even if the true values of the calibration parameters are known. We call such computer models \textit{imperfect}. \cite{kennedy2001bayesian} first discusses this model uncertainty problem and proposes a Bayesian method, which models the discrepancy between the physical process and the computer output as a Gaussian process. Because of the importance of calibration for computer models, the Kennedy-O'Hagan's approach has been widely used, including hydrology, radiological protection, cylinder implosion, spot welding, micro-cutting, climate prediction and cosmology. See \cite{higdon2004combining,higdon2008computer,higdon2013computer}, \cite{bayarri2007computer,bayarri2007framework}, \cite{joseph2009statistical}, \cite{wang2009bayesian}, \cite{han2009simultaneous}, \cite{goldstein2004probabilistic} and \cite{murphy2007methodology}, \cite{goh2013prediction}.

\cite{tuo2014calibration} studies the asymptotic properties of the calibration parameter estimators given by \cite{kennedy2001bayesian} and shows that their method can lead to unreasonable estimate.
The first theoretical framework for calibration problems of the Kennedy-O'Hagan type is established by \cite{tuo2014calibration} under the assumption that the physical responses have no random error. This assumption is needed to make the mathematical analysis for a version of the Kennedy-O'Hagan's approach feasible. Given the fact that the responses in physical experiments are rarely deterministic, it is necessary to extend the study to cases where the physical responses have measurement or observational errors. For convenience, we use the term ``stochastic physical experiments'' to denote physical responses with random errors.

In \cite{tuo2014calibration}, the theory of native spaces is used to derive the convergence rate for calibration with deterministic physical systems. Because of the random error in the current context, the interpolation theory fails to work. In this work we will mainly use mathematical tools of weak convergence, including the limiting theory of empirical processes.

The main theme of this article is to propose a general framework for calibration and provide an efficient estimator for the calibration parameter.
We utilize a nonparametric regression method to model the physical outputs. Similar models are also considered in the literature of response surface methodology. See \cite{myers1999response} and \cite{anderson2005some}. To estimate the calibration parameter, we extend the $L_2$ calibration method proposed by \cite{tuo2014calibration} to the present context. This novel method is proven to be semiparametric efficient when the measurement error follows a normal distribution.
A conventional method, namely, the ordinary least squares method, is also studied,
and shown to be consistent but not efficient.

This paper is organized as follows. In Section 2, we extend the $L_2$ projection defined by \cite{tuo2014calibration} to stochastic systems and propose the $L_2$ calibration method in the current context. In Section 3, the asymptotic behavior for $L_2$ calibration is studied. In Section 4, we consider the ordinary least squares method. The proposed method is illustrated and its performance studied in two numerical examples in Section 5. Concluding remarks are given in Section 6.

\section{$L_2$ Projection for Systems with Stochastic Physical Experiments}


Let $\Omega$ denote the region of interest for the control variables, which is a convex and compact subset of $\mathbf{R}^d$. Let $x_1,\ldots,x_n$ be a set of points on $\Omega$. Suppose the physical experiment is conducted once on each $x_i$, with the corresponding response denoted by $y^p_i$, for $i=1,\ldots,n$, where the superscript $p$ stands for ``physical''. In this work, we assume the physical system is stochastic, that is, the physical responses have random measurement or observational errors. To incorporate this randomness, we consider the following nonparametric model:
\begin{eqnarray}
y_i=\zeta(x_i)+e_i,\label{lpls}
\end{eqnarray}
where $\zeta(\cdot)$ is an unknown deterministic function and $\{e_i\}_{i=1}^n$ is a sequence of independent and identically distributed random variables with $E e_i=0$ and $E e_i^2=\sigma^2<+\infty$. This model is also adopted by \cite{kennedy2001bayesian}, where $\zeta(\cdot)$ is called the \textit{true process}. In addition, \citeauthor{kennedy2001bayesian} assumes that $e_i$'s follow a normal distribution. Such a distribution assumption will be slightly relaxed in our theoretical analysis.

Let $\Theta$ be the parameter space for the calibration parameter $\theta$. Suppose $\Theta$ is a compact region in $\mathbf{R}^q$. Denote the output of the deterministic computer code at $(x,\theta)\in\Omega\times\Theta$ by $y^s(x,\theta)$, where the superscript $s$ stands for ``simulation''.
In the frequentist framework of calibration established by \cite{tuo2014calibration}, the concept of \textit{$L_2$ projection} plays a central role. Because the ``true'' calibration parameter (as stated in \cite{kennedy2001bayesian}) is unidentifiable, \cite{tuo2014calibration} defines the purpose of calibration as that of finding the $L_2$ projection $\theta^*$ which minimizes the $L_2$ distance between the physical response surface and the computer outputs as a function of the control variables. In the present context, the physical responses are observed with errors. A good definition of the ``true'' value of $\theta$ should exclude the uncertainty in $y^p$. Thus we suggest the following definition for the $L_2$ projection using the true process $\zeta(\cdot)$:
\begin{eqnarray}
\theta^*:=\operatorname*{argmin}\limits_{\theta\in\Theta}\|\zeta(\cdot)-y^s(\cdot,\theta)\|_{L_2(\Omega)}.\label{l2projection}
\end{eqnarray}
The main focus of this article is on the statistical inference for $\theta^*$.

\subsection{$L_2$ Calibration}\label{Sec L2stochastic}
In this section, we will extend the $L_2$ calibration method proposed by \cite{tuo2014calibration} to the present context. Since \citeauthor{tuo2014calibration} assumes that the physical experiment is deterministic, they use the kernel interpolation method to approximate the physical response surface.
Because of the existence of the random error in (\ref{lpls}), the kernel interpolation can perform poorly because interpolation methods generally suffer from the problem of overfitting.

In spatial statistics, the effect of the random error is usually modeled with a white noise process, which is also referred to as a nugget term in the kriging modeling \citep{cressie1993statistics}.
In the computer experiment literature, it is also common to use the nugget term in Gaussian process modeling to tackle the numerical instability problems \citep{gramacy2010cases,peng2013choice}.

Let $z(\cdot)$ be a Gaussian process with mean zero and covariance function $\Phi(\cdot,\cdot)$. Suppose $\{(x_i,y_i)\}_{i=1}^y$ are obtained, which satisfy $y_i=z(x_i)+\epsilon_i$ with $\epsilon_i$'s being i.i.d. and distributed as $N(0,\sigma^2)$. Then the predictive mean of $z(\cdot)$ is given by
\begin{eqnarray}
\hat{z}(x)=\sum_{i=1}^n u_i\Phi(x_i,x),\label{representer2}
\end{eqnarray}
where $u=(u_1,\ldots,u_n)^\text{T}$ is the solution to the linear system
\begin{eqnarray}
Y=(\mathbf{\Phi}+\sigma^2 \mathbf{I})u,\label{representer1}
\end{eqnarray}
with $Y=(y_1,\ldots,y_n)^\text{T}$ and $\mathbf{\Phi}=(\Phi(x_i,x_j))_{i j}$.
By the representer Theorem \citep{wahba1990spline,scholkopf2001generalized}, $\hat{z}(x)$ given by (\ref{representer2}) and (\ref{representer1}) is the solution of the following minimization problem with some $\lambda>0$:
\begin{eqnarray}
\operatorname*{argmin}_{f\in\mathcal{N}_\Phi(\Omega)}\frac{1}{n}\sum_{i=1}^n(y_i-f(x_i))^2+\lambda\|f\|^2_{\mathcal{N}_\Phi(\Omega)},\label{RKHSreg}
\end{eqnarray}
where $\|\cdot\|_{\mathcal{N}_\Phi(\Omega)}$ is the norm of the \textit{reproducing kernel Hilbert space} $\mathcal{N}_\Phi(\Omega)$ generated by the kernel function $\Phi$. We refer to \cite{wendland2005scattered} and \cite{wahba1990spline} for detailed discussions about these spaces. The solution to (\ref{RKHSreg}) is referred to as the \textit{nonparametric regressor} in the reproducing kernel Hilbert space \citep{berlinet2004reproducing}.

Now we are ready to define the $L_2$ calibration method for systems with stochastic physical experiments. Suppose the physical experiment is conducted over a design set $\{x_1,\ldots,x_n\}$. Define
\begin{eqnarray}
\hat{\zeta}:=\operatorname*{argmin}_{f\in\mathcal{N}_\Phi(\Omega)}\frac{1}{n}\sum_{i=1}^n(y^p_i-f(x_i))^2+ \lambda\|f\|^2_{\mathcal{N}_\Phi(\Omega)},\label{smoothing}
\end{eqnarray}
where the smoothing parameter $\lambda$ can be chosen using certain model selection criterion, e.g., generalized cross validation (GCV). See \cite{wahba1990spline}. 
We define the $L_2$ calibration for $\theta$ as
\begin{eqnarray*}
\hat{\theta}^{L_2}:=\operatorname*{argmin}_{\theta\in\Theta}\|\hat{\zeta}(\cdot)-\hat{y}^s(\cdot,\theta)\|_{L_2(\Omega)},
\end{eqnarray*}
where $\hat{y}^s(\cdot,\cdot)$ is an \textit{emulator} for the computer code $y^s(\cdot,\cdot)$. In this work, the emulator for the computer model can be constructed by any method provided that it approximates $y^s$ well. For instance, $\hat{y}^s$ can be constructed by the radial basis function approximation \citep{wendland2005scattered}, Gaussian process models \citep{santner2003design} or the polynomial chaos approximation \citep{xiu2010numerical}.

\section{Asymptotic Results for $L_2$ Calibration}

We now consider the asymptotic behavior of $\hat{\theta}^{L_2}$ as the sample size $n$ becomes large. For mathematical rigor, we write
\begin{eqnarray}
\hat{\zeta}_n=\operatorname*{argmin}_{f\in\mathcal{N}_\Phi(\Omega)}\frac{1}{n}\sum_{i=1}^n(y^p_i-f(x_i))^2+ \lambda_n \|f\|^2_{\mathcal{N}_\Phi(\Omega)},\label{smoothingn}
\end{eqnarray}
for all sufficiently large $n$, where $\{\lambda_n\}_{i=1}^n$ is a prespecified sequence of positive values. For the ease of mathematical treatment, we assume the $x_i$'s are a sequence of random samples rather than fixed design points. We also write the $L_2$ calibration estimator indexed by $n$ as
\begin{eqnarray}
\hat{\theta}^{L_2}_n:=\operatorname*{argmin}_{\theta\in\Theta}\|\hat{\zeta}_n(\cdot)-\hat{y}^s_n(\cdot,\theta)\|_{L_2(\Omega)},\label{L2stochastic}
\end{eqnarray}
where the emulator $\hat{y}^s$ is also indexed by $n$. We assume that $\hat{y}^s_n$ has increasing approximation power as $n$ becomes large.


\subsection{Asymptotic Results for $\hat{\zeta}_n$}

Before stating the asymptotic results for $\hat{\theta}_n^{L_2}$, we need first to show that $\hat{\zeta}_n$ tend to $\zeta$. To study the convergence, we need some additional definitions from the theory of empirical processes \citep{kosorok2008introduction}. For function space $\mathcal{F}$ over $\Omega$, define the covering number $N(\delta,\mathcal{F},\|\cdot\|_{L_\infty(\Omega)})$ as the smallest value of $N$ for which there exist functions $f_1,\ldots,f_N$, such that for each $f\in\mathcal{F}$, $\|f-f_j\|_{L_\infty(\Omega)}\leq\delta$ for some $j\in\{1,\ldots,N\}$. The $L_2$ covering number with bracketing $N_{[\,]}(\delta,\mathcal{F},\|\cdot\|_{L_2(\Omega)})$ is the smallest value of $N$ for which there exist $L_2$ functions $\{f_1^L,f_1^U,\ldots,f_N^L,f_N^U\}$ with $\|f_j^U-f_j^L\|_{L_2(\Omega)}\leq\delta, j=1,\ldots,N$ such that for each $f\in\mathcal{F}$ there exists a $j$ such that $f_j^L\leq f\leq f_j^U$. 

We now state a result for general nonparametric regression.
Suppose $\mathcal{F}$ is a space of functions over a compact region $\Omega$ equipped with a norm $\|\cdot\|$. Suppose the true model is
\begin{eqnarray}
y_i=f_0(x_i)+\epsilon_i,\label{npmodel}
\end{eqnarray}
and $x_i$ are i.i.d. from the uniform distribution $U(\Omega)$ over $\Omega$. In addition, the sequences $\{x_i\}$ and $\{e_i\}$ are independent and $e_i$ has zero mean. We use ``$\preceq$'' to denote that the left side is dominated by the right side up to a constant. Let
\begin{eqnarray}
\hat{f}_n=\operatorname*{argmin}_{f\in\mathcal{F}}\frac{1}{n}\sum_{i=1}^n(y_i-f(x_i))^2+\lambda_n\|f\|^2,\label{npestimator}
\end{eqnarray}
for some $\lambda_n>0$.

\begin{lemma}\label{Le smoothingspline}
Under the model (\ref{npmodel}), suppose $f_0\in\mathcal{F}$. Let $\mathcal{F}(\rho):=\{f\in\mathcal{F}:\|f\|\leq \rho\}$. Suppose there exists $C_0>0$ such that $E[\exp(C_0 |e_i|)]<\infty$.
Moreover, there exists $0<\tau<2$ such that
\begin{eqnarray*}
\log N_{[\,]}(\delta,\mathcal{F}(\rho),\|\cdot\|_{L_2(\Omega)})\preceq \rho^\tau \delta^{-\tau},
\end{eqnarray*}
for all $\delta,\rho>0$.
Then if $\lambda^{-1}_n=O(n^{2/(2+\tau)})$, the estimator $\hat{f}_n$ given by (\ref{npestimator}) satisfies
\begin{eqnarray*}
\|\hat{f}_n\|=O_p(1),\text{ and } \|\hat{f}_n-f_0\|_{L_2(\Omega)}=O_p(\lambda^{1/2}_n).
\end{eqnarray*}

\end{lemma}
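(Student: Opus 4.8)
The plan is to run the classical penalized least squares argument from empirical process theory \citep{kosorok2008introduction}: derive a basic inequality, control the resulting noise term by a maximal inequality driven by the bracketing entropy bound, compare the empirical norm with the $L_2(\Omega)$ norm, and finish with a peeling device to handle the random radii. Throughout I write $\|g\|_n^2:=n^{-1}\sum_{i=1}^n g(x_i)^2$.

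\textbf{Step 1 (basic inequality).} Setting $g:=\hat f_n-f_0$ and using that $\hat f_n$ minimizes (\ref{npestimator}) while $f_0\in\mathcal F$ is feasible, I would substitute $y_i=f_0(x_i)+e_i$ and rearrange to get
\[
\|\hat f_n-f_0\|_n^2+\lambda_n\|\hat f_n\|^2\;\le\;\frac{2}{n}\sum_{i=1}^n e_i\bigl(\hat f_n(x_i)-f_0(x_i)\bigr)+\lambda_n\|f_0\|^2 .
\]
Since $e_i$ is centered and independent of $x_i$, the noise term is mean zero for each fixed $g$, but $\hat f_n$ is data-dependent, so it must be bounded uniformly over balls. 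Everything then reduces to (i) a uniform bound on the noise term and (ii) passing from $\|\cdot\|_n$ to $\|\cdot\|_{L_2(\Omega)}$.

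\textbf{Step 2 (maximal inequality and norm comparison).} Bracketing numbers are unchanged under the shift by $f_0$, so the class $\mathcal G_\rho:=\{f-f_0:f\in\mathcal F(\rho)\}$ satisfies $\log N_{[\,]}(\delta,\mathcal G_\rho,\|\cdot\|_{L_2(\Omega)})\preceq\rho^\tau\delta^{-\tau}$, and on the event $\{\|\hat f_n\|\le\rho\}$ we have $g\in\mathcal G_\rho$. The assumption $E[\exp(C_0|e_i|)]<\infty$ makes $e_ig(x_i)$ satisfy a Bernstein condition, so I would invoke a Bernstein-type bracketing maximal inequality; with the entropy integral $J(\delta,\rho):=\int_0^\delta\sqrt{1+\log N_{[\,]}(u,\mathcal G_\rho,\|\cdot\|_{L_2(\Omega)})}\,du\preceq\rho^{\tau/2}\delta^{1-\tau/2}$ (finite because $\tau<2$) this gives
\[
E\!\!\sup_{\substack{g\in\mathcal G_\rho\\ \|g\|_{L_2(\Omega)}\le\delta}}\Bigl|\frac1n\sum_{i=1}^n e_ig(x_i)\Bigr|\;\preceq\;\frac{\rho^{\tau/2}\delta^{1-\tau/2}}{\sqrt n}+\frac{\rho^{\tau}\delta^{-\tau}}{n}.
\]
The same entropy bound, together with uniform boundedness of functions in $\mathcal F(\rho)$ (automatic in the reproducing kernel Hilbert space application via the embedding into $C(\Omega)$), yields a Glivenko--Cantelli-type estimate showing $\sup_{g\in\mathcal G_\rho}|\,\|g\|_n^2-\|g\|_{L_2(\Omega)}^2|$ is of smaller order than $\|g\|_{L_2(\Omega)}^2$ on the relevant scale, hence $\|g\|_{L_2(\Omega)}^2\preceq\|g\|_n^2+(\text{lower order})$.

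\textbf{Step 3 (balancing and peeling).} I would apply Step 2 with $\rho=\|\hat f_n\|$ and $\delta\asymp\|\hat f_n-f_0\|_{L_2(\Omega)}$, both random, and handle the randomness by peeling: slice $\|\hat f_n-f_0\|_{L_2(\Omega)}^2+\lambda_n\|\hat f_n\|^2$ into dyadic shells $2^{j-1}r_n^2<\cdot\le 2^jr_n^2$ with target rate $r_n^2\asymp\lambda_n$, apply the maximal inequality and Markov's inequality on each shell, and sum the geometrically decaying tail probabilities. For the balancing, a three-term Young's inequality gives $n^{-1/2}\rho^{\tau/2}\delta^{1-\tau/2}\le\varepsilon\lambda_n\rho^2+\varepsilon\delta^2+C_\varepsilon\,n^{-1}\lambda_n^{-\tau/2}$; the first two terms are absorbed into the left side of the basic inequality, and $n^{-1}\lambda_n^{-\tau/2}=O(\lambda_n)$ precisely when $\lambda_n\gtrsim n^{-2/(2+\tau)}$, which is the stated hypothesis $\lambda_n^{-1}=O(n^{2/(2+\tau)})$; the remainder $\rho^\tau\delta^{-\tau}/n$ is of smaller order under the same bookkeeping. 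Combining, $\|\hat f_n-f_0\|_{L_2(\Omega)}^2+\lambda_n\|\hat f_n\|^2=O_p(\lambda_n)$, which gives simultaneously $\|\hat f_n\|=O_p(1)$ and $\|\hat f_n-f_0\|_{L_2(\Omega)}=O_p(\lambda_n^{1/2})$.

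\textbf{Main obstacle.} I expect the chief difficulty to be the merely sub-exponential (not bounded, not sub-Gaussian) errors, which rule out a plain sub-Gaussian chaining bound and require the Bernstein-type bracketing maximal inequality whose extra $n^{-1}$ remainder term must be carefully carried through the balancing step; a close second is the peeling device needed to substitute the random radii $\|\hat f_n\|$ and $\|\hat f_n-f_0\|_{L_2(\Omega)}$ into the deterministic maximal inequalities. The uniform equivalence of $\|\cdot\|_n$ and $\|\cdot\|_{L_2(\Omega)}$ on $\mathcal G_\rho$ is a lesser but nontrivial ingredient, relying on uniform boundedness of balls in $\mathcal F$.
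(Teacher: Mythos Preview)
Your proposal is correct and follows exactly the classical penalized least squares argument via bracketing entropy and peeling that underlies this result. The paper does not actually prove the lemma: its entire proof is the single line ``See \cite{van2000empirical}'', deferring to van de Geer's monograph, whose Chapter 10 carries out precisely the basic-inequality/maximal-inequality/peeling scheme you outlined (including the Bernstein-type maximal inequality needed for sub-exponential errors).
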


\begin{proof}
See \cite{van2000empirical}.
\end{proof}

The covering numbers for some reproducing kernel Hilbert spaces have been calculated accurately in the literature.
For instance, consider a Mat\'{e}rn kernel function given by
\begin{eqnarray}\label{matern}
\Phi(s,t;\nu,\phi)=\frac{1}{\Gamma(v)2^{\nu-1}}\left(2\sqrt{\nu}\phi \|s-t\|\right)^\nu K_\nu\left(2\sqrt{\nu}\phi\|s-t\|\right),
\end{eqnarray}
with $\nu\geq 1$ \citep{stein1999interpolation,santner2003design}. The reproducing kernel Hilbert space generated by this kernel function is equal to the (fractional) Sobolev space $H^{\nu+d/2}(\Omega)$, and $\|\cdot\|_{\mathcal{N}_\Phi(\Omega)}$ and $\|\cdot\|_{H^{\nu+d/2}(\Omega)}$ are equivalent. See Corollary 1 of \cite{tuo2014calibration}. Let $H^\mu(\Omega,\rho):=\{f:\|f\|_{H^\mu(\Omega)}\leq \rho\}$. \cite{edmunds2008function} proves that for $\mu>d/2$, the covering number of $H^\mu(\Omega,\rho)$ is bounded by
\begin{eqnarray*}
\log N(\delta,H^\mu(\Omega,\rho),\|\cdot\|_{L_\infty(\Omega)})\leq \Big(\frac{C \rho}{\delta}\Big)^{d/\mu},
\end{eqnarray*}
where $C$ is independent of $\rho$ and $\delta$.
To calculate the $L_2$ metric entropy with bracketing, we note the fact that every $f, f'\in H^\mu(\Omega,\rho)$ with $\|f-f'\|\leq \delta$ satisfy the inequality $f'-\delta\leq f\leq f'+\delta$. Thus the union of the $\delta$-balls centered at $f_1,\ldots,f_n$ is covered by the union of the ``brackets'' $[f_1-\delta,f_1+\delta],\ldots,[f_n-\delta,f_n+\delta]$, which, together with the definition of the covering number and the $L_2$ covering number with bracketing, implies that
\begin{eqnarray}
\log N_{[\,]}(2\delta \sqrt{Vol(\Omega)},H^\mu(\Omega,\rho),\|\cdot\|_{L_2(\Omega)})\leq \Big(\frac{C \rho}{\delta}\Big)^{d/\mu},\label{soboleventropy}
\end{eqnarray}
where $Vol(\Omega)$ denotes the volume of $\Omega$, and $2\delta \sqrt{Vol(\Omega)}$ is the $L_2(\Omega)$ norm of the function $2\delta$. Then by applying Lemma \ref{Le smoothingspline}, the following result can be obtained after direct calculations.

\begin{proposition}\label{Pr nppart}
Under the model (\ref{npmodel}), suppose $f_0\in\mathcal{F}=\mathcal{N}_\Phi(\Omega)$ and $\mathcal{N}_\Phi(\Omega)$ can be embedded into $H^\mu(\Omega)$ with $\mu>d/2$. Choose $\|\cdot\|=\|\cdot\|_{\mathcal{N}_\Phi(\Omega)}$. Then for $\lambda^{-1}_n=O(n^{2\mu/(2\mu+d)})$, the estimator $\hat{f}_n$ given by (\ref{npestimator}) satisfies
\begin{eqnarray*}
\|\hat{f}_n\|_{\mathcal{N}_\Phi(\Omega)}=O_p(1), \text{ and } \|\hat{f}_n-f_0\|_{L_2(\Omega)}=O_p(\lambda^{1/2}_n).
\end{eqnarray*}
\end{proposition}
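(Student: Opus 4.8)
The plan is to verify that the hypotheses of Lemma \ref{Le smoothingspline} are satisfied with $\mathcal{F}=\mathcal{N}_\Phi(\Omega)$, $\|\cdot\|=\|\cdot\|_{\mathcal{N}_\Phi(\Omega)}$, and exponent $\tau=d/\mu$, and then simply read off the conclusion. The condition $f_0\in\mathcal{F}$ is assumed directly. Since $\mu>d/2$ we automatically have $0<\tau=d/\mu<2$, which is exactly the admissible range in the lemma; and matching the tuning rate is immediate, because with $\tau=d/\mu$ the requirement $\lambda_n^{-1}=O(n^{2/(2+\tau)})$ of Lemma \ref{Le smoothingspline} becomes $\lambda_n^{-1}=O(n^{2\mu/(2\mu+d)})$, as assumed here.

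The only substantive step is the translation of the metric entropy bound. By hypothesis the embedding $\mathcal{N}_\Phi(\Omega)\hookrightarrow H^\mu(\Omega)$ is continuous, so there is a constant $A$ with $\|f\|_{H^\mu(\Omega)}\le A\|f\|_{\mathcal{N}_\Phi(\Omega)}$ for all $f$; hence $\mathcal{F}(\rho)=\{f:\|f\|_{\mathcal{N}_\Phi(\Omega)}\le\rho\}\subseteq H^\mu(\Omega,A\rho)$, and therefore any bracketing set for $H^\mu(\Omega,A\rho)$ is also one for $\mathcal{F}(\rho)$, giving $N_{[\,]}(\delta,\mathcal{F}(\rho),\|\cdot\|_{L_2(\Omega)})\le N_{[\,]}(\delta,H^\mu(\Omega,A\rho),\|\cdot\|_{L_2(\Omega)})$. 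Substituting $\delta\mapsto\delta/(2\sqrt{Vol(\Omega)})$ and $\rho\mapsto A\rho$ into the Sobolev bracketing bound (\ref{soboleventropy}) then yields
\begin{eqnarray*}
\log N_{[\,]}(\delta,\mathcal{F}(\rho),\|\cdot\|_{L_2(\Omega)})\le\Big(\frac{2CA\sqrt{Vol(\Omega)}\,\rho}{\delta}\Big)^{d/\mu}\preceq\rho^{d/\mu}\delta^{-d/\mu},
\end{eqnarray*}
with an implicit constant depending only on $\Omega$, $d$, $\mu$ and $A$. This is precisely the entropy condition of Lemma \ref{Le smoothingspline} with $\tau=d/\mu$.

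It then remains only to note that, inheriting from model (\ref{npmodel}) the assumptions that the $x_i$ are i.i.d.\ $U(\Omega)$ and independent of the mean-zero errors, together with the sub-exponential tail condition $E[\exp(C_0|e_i|)]<\infty$ required by Lemma \ref{Le smoothingspline}, all hypotheses of that lemma hold; applying it gives $\|\hat{f}_n\|_{\mathcal{N}_\Phi(\Omega)}=O_p(1)$ and $\|\hat{f}_n-f_0\|_{L_2(\Omega)}=O_p(\lambda_n^{1/2})$. I do not anticipate a real obstacle: the argument is essentially bookkeeping. The only points needing care are tracking the volume factor $\sqrt{Vol(\Omega)}$ and the embedding constant $A$ through the substitution so that the bound has exactly the form $\rho^\tau\delta^{-\tau}$, and observing that $\mu>d/2$ is exactly what keeps $\tau$ strictly below $2$.
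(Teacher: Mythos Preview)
Your proposal is correct and follows essentially the same route as the paper: the paper's entire argument for this proposition is the single sentence ``by applying Lemma~\ref{Le smoothingspline}, the following result can be obtained after direct calculations,'' and you have supplied exactly those calculations---identifying $\tau=d/\mu$, passing through the embedding $\mathcal{N}_\Phi(\Omega)\hookrightarrow H^\mu(\Omega)$ to inherit the bracketing bound (\ref{soboleventropy}), and matching the tuning rate. Your explicit tracking of the embedding constant $A$ and the volume factor, and your remark that the sub-exponential tail condition of Lemma~\ref{Le smoothingspline} must be carried along (the proposition's statement is silent on it), are details the paper leaves implicit.
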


In Proposition \ref{Pr nppart}, one can choose $\lambda\asymp n^{-2\mu/(2\mu+d)}$ to obtain the best convergence rate $\|\hat{f}_n-f_0\|_{L_2(\Omega)}=O_p(n^{-\mu/(2\mu+d)})$, where ``$\asymp$'' denotes that its left and the right sides have the same order of magnitude. This rate is known to be optimal \citep{stone1982optimal}.

\subsection{Asymptotic Normality}

The main purpose of calibration is to estimate the calibration parameter $\theta^*$. In this section, we will prove some convergence properties of the $L_2$ calibration: its convergence rate is given by $\|\hat{\theta}_{n}^{L_2}-\theta^*\|=O_p(n^{-1/2})$ and the distribution of $\sqrt{n}(\hat{\theta}_{n}^{L_2}-\theta^*)$ tends to normal as $n\rightarrow \infty$ under certain conditions. This is a nontrivial result because the convergence rate for the nonparametric part $\|\hat{\zeta}_n(\cdot)-\zeta\|_{L_2(\Omega)}$ is generally slower than $O_p(n^{-1/2})$ (see Proposition \ref{Pr nppart}).

We first list necessary conditions for the convergence result, which are grouped in three categories.

The first group consists of regularity conditions on the model. For any $\theta\in\Theta\subset\mathbf{R}^q$, write $\theta=(\theta_1,\ldots,\theta_q)$.

\begin{enumerate}[{A}1{:}]
\item The sequences $\{x_i\}$ and $\{e_i\}$ are independent; $x_i$'s are i.i.d. from $U(\Omega)$; and $\{e_i\}$ is a sequence of i.i.d. random variables with zero mean and finite variance.
\item $\theta^*$ is the unique solution to (\ref{l2projection}), and is an interior point of $\Theta$.
\item $\sup_{\theta\in\Theta}\|y^s(\cdot,\theta)\|_{L_2(\Omega)}< +\infty$.
\item $V:=E\left[\frac{\partial^2}{\partial\theta\partial\theta^\text{T}}(\zeta(x_1)-y^s(x_1,\theta^*))^2\right]$ is invertible.
\item There exists a neighborhood $U\subset\Theta$ of $\theta^*$, such that
\begin{eqnarray*}
\sup_{\theta\in U}\left\|\frac{\partial y^s}{\partial \theta_j}(\cdot,\theta)\right\|_{\mathcal{N}_\Phi(\Omega)}<+\infty,\frac{\partial^2 y^s}{\partial \theta_j\partial \theta_k}(\cdot,\cdot)\in C(\Omega\times U),
\end{eqnarray*}
for all $\theta\in U$ and $j,k=1,\ldots,q$.
\end{enumerate}


Next we need some conditions on the nonparametric part.

\begin{enumerate}[{B}1{:}]
\item $\zeta\in\mathcal{N}_\Phi(\Omega)$ and $\mathcal{N}_\Phi(\Omega,\rho)$ is Donsker for all $\rho>0$.
\item $\|\hat{\zeta}-\zeta\|_{L_2(\Omega)}=o_p(1)$.
\item $\|\hat{\zeta}\|_{\mathcal{N}_\Phi(\Omega)}=O_p(1)$.
\item $\lambda_n=o_p(n^{-1/2})$.
\end{enumerate}

The Donsker property is an important concept in the theory of empirical processes. For its definition and detailed discussion, we refer to \cite{van1996weak} and \cite{kosorok2008introduction}. One major result is that a class of functions over domain $\Omega$, denoted as $\mathcal{F}$, is Donsker, if
\begin{eqnarray*}
\int_0^\infty\sqrt{\log N_{[\,]}(\delta,\mathcal{F},L_2(\Omega))}d \delta<+\infty.
\end{eqnarray*}
Thus from (\ref{soboleventropy}) we can see that if $\mathcal{N}_\Phi(\Omega)$ can be embedded into $H^\mu(\Omega)$ for some $\mu>d/2$, $\mathcal{N}_\Phi(\Omega)$ is Donsker. Actually if we further assume condition A1 and $E[\exp(C|e_i|)]<+\infty$ for some $C>0$, the conditions of Proposition \ref{Pr nppart} are satisfied. Then by choosing a suitable sequence of $\{\lambda_n\}$, say $\lambda\asymp n^{-2\mu/(2\mu+d)}$, one can show that condition B4 holds and condition B2 and B3 are ensured by Proposition \ref{Pr nppart}.

Finally we need to assume some convergence properties for the emulator. In this work, we assume that the approximation error caused by emulating the computer experiment is negligible compared to the estimation error caused by the measurement error in the physical experiment. Under this assumption, the asymptotic behavior of $\hat{\theta}^{L_2}_n-\theta^*$ is determined by the central limit theorem. Given that computer experiment is usually much cheaper to run than physical experiment, such an assumption is reasonable because the size of computer runs is in general much larger than the size of physical trials.

\begin{enumerate}[{C}1{:}]
\item $\|\hat{y}^s_n-y^s\|_{L_\infty(\Omega\times\Theta)}=o_p(n^{-1/2})$.
\item $\left\|\frac{\partial\hat{y}^s}{\partial\theta_i}-\frac{\partial y^s}{\partial \theta_i}\right\|_{L_\infty(\Omega\times\Theta)}=o_p(n^{-1/2})$, for $i=1,\ldots,q$.
\end{enumerate}


Now we are ready to state the main theorem of this section on the asymptotic normality of the $L_2$ calibration.

\begin{theorem}\label{Th AN}
Under conditions A1-A5, B1-B4, and C1-C2, we have
\begin{eqnarray}
\hat{\theta}_n^{L_2}-\theta^*=-2 V^{-1}\left\{ \frac{1}{n}\sum_{i=1}^n e_i \frac{\partial y^s}{\partial \theta}(x_i,\theta^*)\right\}+o_p(n^{-1/2}),\label{normality}
\end{eqnarray}
where $V$ is defined in condition A4.
\end{theorem}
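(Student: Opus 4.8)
The plan is to treat (\ref{L2stochastic}) as an $M$-estimation problem for a smooth population criterion and to establish (\ref{normality}) along the classical route: consistency of $\hat\theta^{L_2}_n$, reduction to a first-order estimating equation with the emulator error peeled off, and a Taylor expansion around $\theta^*$ in which the nonparametric plug-in $\hat\zeta_n$ enters only through a $\sqrt n$-asymptotically linear term. First I would establish consistency. Write $M_n(\theta)=\|\hat\zeta_n(\cdot)-\hat y^s_n(\cdot,\theta)\|_{L_2(\Omega)}^2$ and $M(\theta)=\|\zeta(\cdot)-y^s(\cdot,\theta)\|_{L_2(\Omega)}^2$. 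From $\|\hat\zeta_n-\zeta\|_{L_2(\Omega)}=o_p(1)$ (B2), $\|\hat\zeta_n\|_{\mathcal N_\Phi(\Omega)}=O_p(1)$ together with the embedding $\mathcal N_\Phi(\Omega)\hookrightarrow L_2(\Omega)$ (B1, B3), $\sup_{\theta\in\Theta}\|y^s(\cdot,\theta)\|_{L_2(\Omega)}<\infty$ (A3), and $\|\hat y^s_n-y^s\|_{L_\infty(\Omega\times\Theta)}=o_p(1)$ (C1), one obtains $\sup_{\theta\in\Theta}|M_n(\theta)-M(\theta)|=o_p(1)$; since $\theta^*$ uniquely minimizes $M$ over the compact set $\Theta$ (A2), a standard argmin argument then gives $\hat\theta^{L_2}_n\to\theta^*$ in probability, so in particular $\hat\theta^{L_2}_n$ eventually lies in the interior neighborhood $U$ of A2 and A5.

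On the event $\{\hat\theta^{L_2}_n\in U\}$ the stationarity condition $\int_\Omega\bigl(\hat\zeta_n(x)-\hat y^s_n(x,\hat\theta^{L_2}_n)\bigr)\,\partial_{\theta_j}\hat y^s_n(x,\hat\theta^{L_2}_n)\,dx=0$ holds for $j=1,\dots,q$. Setting $\Psi_{n,j}(\theta):=\int_\Omega\bigl(\hat\zeta_n(x)-y^s(x,\theta)\bigr)\,\partial_{\theta_j}y^s(x,\theta)\,dx$, the difference between $\Psi_{n,j}(\hat\theta^{L_2}_n)$ and this stationarity integral is bounded, for $\theta\in U$, by $\|\hat\zeta_n-\hat y^s_n\|_{L_2(\Omega)}\|\partial_{\theta_j}\hat y^s_n-\partial_{\theta_j}y^s\|_{L_\infty(\Omega\times\Theta)}+\|\hat y^s_n-y^s\|_{L_\infty(\Omega\times\Theta)}\sup_{\theta\in U}\|\partial_{\theta_j}y^s(\cdot,\theta)\|_{L_2(\Omega)}$, which is $o_p(n^{-1/2})$ by C1--C2, the local boundedness in A5, and $\|\hat\zeta_n-\hat y^s_n\|_{L_2(\Omega)}=O_p(1)$ from the first step. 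Hence $\Psi_n(\hat\theta^{L_2}_n)=o_p(n^{-1/2})$.

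The crux is to show that $\Psi_n(\theta^*)$ is itself $\sqrt n$-asymptotically linear in the errors. Interior stationarity of $M$ at $\theta^*$ (A2) gives $\int_\Omega(\zeta-y^s(\cdot,\theta^*))\,\partial_{\theta_j}y^s(\cdot,\theta^*)\,dx=0$, so $\Psi_{n,j}(\theta^*)=\int_\Omega(\hat\zeta_n-\zeta)(x)\,g_j(x)\,dx$ with $g_j:=\partial_{\theta_j}y^s(\cdot,\theta^*)$, which by A5 lies in $\mathcal N_\Phi(\Omega)$ and is bounded. I would convert this into an average of the $e_i$'s through two observations. First, the representer-theorem optimality condition for (\ref{smoothingn}) states that, for every $h\in\mathcal N_\Phi(\Omega)$,
\begin{equation*}
\frac1n\sum_{i=1}^n\bigl(\hat\zeta_n(x_i)-\zeta(x_i)\bigr)h(x_i)=\frac1n\sum_{i=1}^n e_i\,h(x_i)-\lambda_n\langle\hat\zeta_n,h\rangle_{\mathcal N_\Phi(\Omega)};
\end{equation*}
taking $h=g_j$ and using $\|\hat\zeta_n\|_{\mathcal N_\Phi(\Omega)}=O_p(1)$ (B3), $\|g_j\|_{\mathcal N_\Phi(\Omega)}<\infty$ (A5), and $\lambda_n=o(n^{-1/2})$ (B4) makes the last term $o_p(n^{-1/2})$, so $\tfrac1n\sum_i(\hat\zeta_n-\zeta)(x_i)g_j(x_i)=\tfrac1n\sum_i e_i g_j(x_i)+o_p(n^{-1/2})$. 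Second, since $\mathcal N_\Phi(\Omega,\rho)$ is Donsker (B1) and $g_j$ is a fixed bounded function, the class $\{(f-\zeta)g_j:\|f\|_{\mathcal N_\Phi(\Omega)}\le\rho\}$ is Donsker for every $\rho>0$; because $(\hat\zeta_n-\zeta)g_j$ lies in this class with probability tending to one (B3) and has $L_2(\Omega)$-norm $o_p(1)$ (B2), asymptotic equicontinuity of the empirical process replaces the empirical average $\tfrac1n\sum_i(\hat\zeta_n-\zeta)(x_i)g_j(x_i)$ by the population integral $c_\Omega^{-1}\int_\Omega(\hat\zeta_n-\zeta)g_j\,dx$ up to an $o_p(n^{-1/2})$ error, where $c_\Omega$ is the fixed constant relating Lebesgue integration on $\Omega$ to integration against $U(\Omega)$. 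Combining the two gives $\Psi_{n,j}(\theta^*)=c_\Omega\bigl\{\tfrac1n\sum_i e_i\,\partial_{\theta_j}y^s(x_i,\theta^*)\bigr\}+o_p(n^{-1/2})$.

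The remaining step is a componentwise Taylor expansion: $o_p(n^{-1/2})=\Psi_{n,j}(\hat\theta^{L_2}_n)=\Psi_{n,j}(\theta^*)+\nabla_\theta\Psi_{n,j}(\bar\theta_{n,j})^{\mathrm T}(\hat\theta^{L_2}_n-\theta^*)$ with $\bar\theta_{n,j}$ on the segment joining $\theta^*$ and $\hat\theta^{L_2}_n$. Writing $\hat\zeta_n-y^s=(\zeta-y^s)+(\hat\zeta_n-\zeta)$ inside $\nabla_\theta\Psi_{n,j}$ splits it into a deterministic part that is continuous in $\theta$ on $U$ and equals, at $\theta^*$, $c_\Omega$ times the $j$th row of $-\tfrac12 V$ (by the definition of $V$ in A4), plus a remainder $\int_\Omega(\hat\zeta_n-\zeta)(x)\,v_j(x,\theta)\,dx$, with $v_j$ built from the second $\theta$-derivatives of $y^s$, which is $o_p(1)$ by Cauchy--Schwarz using $\|\hat\zeta_n-\zeta\|_{L_2(\Omega)}=o_p(1)$ (B2) and boundedness of those second derivatives on $\Omega\times U$ (A5). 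Since $\hat\theta^{L_2}_n\to\theta^*$, the full gradient matrix converges in probability to $-\tfrac12 c_\Omega V$, which is invertible by A4, hence is invertible with probability tending to one; inverting it, substituting the expansion of $\Psi_n(\theta^*)$ from the previous step, and cancelling the constant $c_\Omega$ between the two, yields (\ref{normality}), and the asymptotic normality of $\sqrt n(\hat\theta^{L_2}_n-\theta^*)$ then follows by the classical central limit theorem applied to the i.i.d.\ mean-zero summands $e_i\,\partial_\theta y^s(x_i,\theta^*)$, of finite variance by A1 and A5. I expect the main obstacle to be the empirical-to-population passage in the crux step: the representer theorem controls only the \emph{empirical} quadratic form $\tfrac1n\sum_i(\hat\zeta_n-\zeta)(x_i)g_j(x_i)$, whereas the $L_2$ criterion involves the \emph{population} integral $\int_\Omega(\hat\zeta_n-\zeta)g_j$, and closing the $o_p(n^{-1/2})$ gap for the \emph{data-dependent} function $\hat\zeta_n$ is exactly what forces one to confine $\hat\zeta_n$ to a single fixed Donsker ball (this is the role of B3) and to invoke asymptotic equicontinuity over that class, with B2 supplying the vanishing modulus; a secondary technical nuisance is making the emulator removal and the Taylor remainder uniform in a neighborhood of $\theta^*$, which is why C1--C2 are imposed in $L_\infty(\Omega\times\Theta)$ and $y^s$ is taken to be $C^2$ on $\Omega\times U$ in A5.
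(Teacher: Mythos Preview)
Your proposal is correct and rests on exactly the same three devices as the paper's proof: consistency from uniform convergence of the $L_2$ criterion, the first-order (representer) condition for the penalized fit $\hat\zeta_n$ tested against $\partial_{\theta_j}y^s$, and asymptotic equicontinuity over a Donsker ball in $\mathcal N_\Phi(\Omega)$ to pass from the empirical inner product $\tfrac1n\sum_i(\hat\zeta_n-\zeta)(x_i)g_j(x_i)$ to the population integral. The only organizational difference is the order of operations. The paper applies the representer and Donsker steps at the random point $\hat\theta_n^{L_2}$, which forces the Donsker class to be the product $\{(g-\zeta):g\in\mathcal N_\Phi(\Omega,\rho)\}\cdot\{\partial_{\theta_j}y^s(\cdot,\theta):\theta\in U\}$ and then requires a separate equicontinuity argument (their term $D_n$) to replace $\hat\theta_n^{L_2}$ by $\theta^*$ in the error average. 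You instead Taylor-expand $\Psi_n$ first and carry out the representer and Donsker steps only at the fixed $\theta^*$, so your Donsker class is indexed by $g$ alone and no analogue of $D_n$ is needed. Your route is a bit more economical; the substance is identical.
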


\begin{proof}
We first prove that $\hat{\theta}_n\operatorname*{\rightarrow}\limits^p \theta^*$. From the definitions of $\theta^*$ and $\hat{\theta}_n^{L_2}$ in (\ref{l2projection}) and (\ref{L2stochastic}), it suffices to prove that $\|\hat{\zeta}_n(\cdot)-\hat{y}^s_n(\cdot,\theta)\|_{L_2(\Omega)}$ converges to $\|\zeta(\cdot)-y^s(\cdot,\theta)\|_{L_2(\Omega)}$ uniformly with respect to $\theta\in\Theta$ in probability, which is ensured by
\begin{eqnarray}
&&\|\hat{\zeta}_n(\cdot)-\hat{y}^s(\cdot,\theta)\|_{L_2(\Omega)}^2-\|\zeta(\cdot)-y^s(\cdot,\theta)\|_{L_2(\Omega)}^2\nonumber\\
&=&\int_{\Omega}\left(\hat{\zeta}_n(z)-\zeta(z)-\hat{y}^s(z,\theta)+y^s(z,\theta)\right)\left(\hat{\zeta}_n(z)+\zeta(z)- y^s(z,\theta)-\hat{y}^s(z,\theta)\right) d z\nonumber\\
&\leq&\left(\|\hat{\zeta}_n-\zeta\|_{L_2(\Omega)}+\|\hat{y}^s(\cdot,\theta)-y^s(\cdot,\theta)\|_{L_2(\Omega)}\right)\cdot\nonumber\\ &&\left(\|\hat{\zeta}_n(\cdot)\|_{L_2(\Omega)}+\|\zeta(\cdot)\|_{L_2(\Omega)}+ \|y^s(\cdot,\theta)\|_{L_2(\Omega)}+\|\hat{y}^s(\cdot,\theta)\|_{L_2(\Omega)}\right), \label{thetan}
\end{eqnarray}
where the inequality follows from the Schwarz inequality and the triangle inequality.
Denote the volume of $\Omega$ by $Vol(\Omega)$. It is easily seen that
\begin{eqnarray*}
\|f\|_{L_2(\Omega)}\leq Vol(\Omega)\|f\|_{L_\infty(\Omega)}
\end{eqnarray*}
holds for all $f\in L_\infty(\Omega)$. Thus
\begin{eqnarray}
\|\hat{y}^s(\cdot,\theta)-y^s(\cdot,\theta)\|_{L_2(\Omega)}&\leq& \sqrt{Vol(\Omega)}\|\hat{y}^s(\cdot,\theta)-y^s(\cdot,\theta)\|_{L_\infty(\Omega)}\nonumber\\ &\leq& \sqrt{Vol(\Omega)}\|\hat{y}^s-y^s\|_{L_\infty(\Omega\times\Theta)}.\label{uniform}
\end{eqnarray}
Additionally, we have
\begin{eqnarray}
&&\|\hat{\zeta}_n\|_{L_2(\Omega)}\leq Vol(\Omega)\|\hat{\zeta}_n\|_{L_\infty(\Omega)}
=Vol(\Omega)\sup_{x\in\Omega}\left\langle\hat{\zeta}_n,\Phi(\cdot,x)\right\rangle_{\mathcal{N}_\Phi(\Omega)}\nonumber\\
&\leq& Vol(\Omega)\|\hat{\zeta}_n\|_{\mathcal{N}_\Phi(\Omega)}\sup_{x\in\Omega}\|\Phi(\cdot,x)\|_{\mathcal{N}_\Phi(\Omega)}
=Vol(\Omega)\|\hat{\zeta}_n\|_{\mathcal{N}_\Phi(\Omega)}.\label{l2native}
\end{eqnarray}
Combining (\ref{uniform}), (\ref{l2native}), B2, and C1, we have that (\ref{thetan}) convergence to 0 uniformly with respect to $\theta\in\Theta$, which yields the consistency of $\hat{\theta}^{L_2}_n$.

Since $\hat{\theta}$ minimizes (\ref{L2stochastic}), following A1, A2 and A5 we have
\begin{eqnarray*}
0&=&\frac{\partial}{\partial\theta}\|\hat{\zeta}_n(\cdot)-\hat{y}^s(\cdot,\hat{\theta}_n^{L_2})\|^2_{L_2(\Omega)}\nonumber\\
&=&2\int_\Omega\left(\hat{\zeta}_n(z)-\hat{y}^s(z,\hat{\theta}_n^{L_2})\right)\frac{\partial \hat{y}^s}{\partial \theta}(z,\hat{\theta}_n^{L_2})d z,
\end{eqnarray*}
which, together with B2, C1 and C2, implies
\begin{eqnarray}
\int_\Omega\left(\hat{\zeta}_n(z)-y^s(z,\hat{\theta}_n^{L_2})\right)\frac{\partial y^s}{\partial \theta}(z,\hat{\theta}_n^{L_2})d z=o_p(n^{-1/2}).\label{eq theta}
\end{eqnarray}
Let $L_n(f)=n^{-1}\sum_{i=1}^n(y^p_i-f(x_i))^2+\lambda_n\|f\|^2_{\mathcal{N}_\Phi(\Omega)}$. By (\ref{smoothing}), $\hat{\zeta}_n$ minimizes $L_n$ over $\mathcal{N}_\Phi(\Omega)$. Since $\hat{\theta}^{L_2}_n$ is consistent, by A5, $\frac{\partial y^s}{\partial \theta_j}(\cdot,\hat{\theta}^{L_2}_n)\in\mathcal{N}_\Phi(\Omega)$ for $j=1,\ldots,q$ and sufficiently large $n$. Thus we have
\begin{eqnarray}
0&=&\frac{\partial}{\partial t} L\Big(\hat{\zeta}_n(\cdot)+t\frac{\partial y^s}{\partial \theta_j}(\cdot,\hat{\theta}_n^{L_2})\Big)\Big|_{t=0}\nonumber\\
&=&\frac{2}{n}\sum_{i=1}^n\{\hat{\zeta}_n(x_i)-y^p_i\}\frac{\partial y^s}{\partial \theta_j}(x_i,\hat{\theta}_n^{L_2})+2\lambda_n\Big\langle\hat{\zeta}_n,\frac{\partial y^s}{\partial \theta_j}(\cdot,\hat{\theta}_n^{L_2})\Big\rangle_{\mathcal{N}_\Phi(\Omega)}\nonumber\\
&=&\frac{2}{n}\sum_{i=1}^n\{\hat{\zeta}_n(x_i)-\zeta(x_i)\}\frac{\partial y^s}{\partial \theta_j}(x_i,\hat{\theta}_n^{L_2})-\frac{2}{n}\sum_{i=1}^n e_i\frac{\partial y^s}{\partial \theta_j}(x_i,\hat{\theta}_n^{L_2})\nonumber\\
&&+2\lambda_n\Big\langle\hat{\zeta}_n,\frac{\partial y^s}{\partial \theta_j}(\cdot,\hat{\theta}_n^{L_2})\Big\rangle_{\mathcal{N}_\Phi(\Omega)}=:2(C_n+D_n+E_n).\label{zero}
\end{eqnarray}

First, we consider $C_n$.
Let $A_i(g,\theta)=\{g(x_i)-\zeta(x_i)\}\frac{\partial y^s}{\partial \theta_j}(x_i,\theta)$ for $(g,\theta)\in\mathcal{N}_\Phi(\Omega,\rho)\times U$ with some $\rho>0$ to be specified later. Then $E[A_i(g,\theta)]=\int_{\Omega}\{g(z)-\zeta(z)\}\frac{\partial y^s}{\partial \theta_j}(z,\theta) d z$.
Define the empirical process
\begin{eqnarray*}
E_{1 n}(g,\theta)=n^{-1/2}\sum_{i=1}^n\{A_i(g,\theta)-E[A_i(g,\theta)]\}.
\end{eqnarray*}
By B1, $\mathcal{N}_\Phi(\Omega,k)$ is Donsker. Thus $\mathcal{F}_1=\{g-\zeta:g\in\mathcal{N}_\Phi(\Omega,\rho)\}$ is also Donsker. Condition A5 implies that $\mathcal{F}_2=\{\frac{\partial y^s}{\partial \theta_j}(\cdot,\theta):\theta\in U\}$ is Donsker. Since both $\mathcal{F}_1$ and $\mathcal{F}_2$ are uniformly bounded, the product class $\mathcal{F}_1\times\mathcal{F}_2$ is also Donsker. 
For theorems on Donsker classes, we refer to \cite{kosorok2008introduction} and the references therein.
Thus the asymptotic equicontinuity property holds, which suggests that (see Theorem 2.4 of \cite{mammen1997penalized}) for any $\xi>0$ there exists a $\delta>0$ such that
\begin{eqnarray*}
\limsup_{n\rightarrow\infty}P\left(\sup_{f\in\mathcal{F}_1\times\mathcal{F}_2,\|f\|\leq \delta}\left|\frac{1}{\sqrt{n}}\sum_{i=1}^n(f(x_i)-E(f(x_i)))\right|> \xi\right)<\xi,
\end{eqnarray*}
where $\|\cdot\|$ is defined as $\|f\|^2:=E[f(x_i)]^2$. This implies that for all $\xi>0$ there exists a $\delta>0$ such that
\begin{eqnarray}
\limsup_{n\rightarrow\infty}P\left(\sup_{g\in\mathcal{N}_\Phi(\Omega,\rho),\theta\in U,\|g-\zeta\|_{L_2(\Omega)}\leq \delta} |E_{1 n}(g,\theta)|>\xi\right)<\xi.\label{equicontinuous}
\end{eqnarray}
Now fix $\varepsilon>0$. Condition B3 implies that there exists $\rho_0>0$, such that $P(\|\hat{\zeta}_n\|_{\mathcal{N}_\Phi(\Omega)}>\rho_0)\leq \varepsilon/3$. Choose $\delta_0$ to be a possible value of $\delta$ satisfying (\ref{equicontinuous}) with $\rho=\rho_0$ and $\xi=\varepsilon/3$. Define
\begin{eqnarray*}
\hat{\zeta}_n^\circ:=
\begin{cases}
\hat{\zeta}_n & \text{ if } \|\hat{\zeta}_n\|_{\mathcal{N}_\Phi(\Omega)}\leq\rho_0 \text{ and } \|\hat{\zeta}_n-\zeta\|_{L_2(\Omega)}\leq\delta_0\\
\zeta & \text{ elsewise }
\end{cases}.
\end{eqnarray*}
Therefore, for sufficiently large $n$ we have
\begin{eqnarray*}
&&P(|E_{1 n}(\hat{\zeta}_n,\hat{\theta}_n^{L_2})|>\varepsilon)\\
&\leq& P(|E_{1 n}(\hat{\zeta}_n^\circ,\hat{\theta}_n^{L_2})|>\varepsilon)+P(\|\hat{\zeta}_n\|_{\mathcal{N}_\Phi(\Omega)}>\rho_0) +P(\|\hat{\zeta}_n-\zeta\|_{L_2(\Omega)}>\delta_0)\\
&\leq& P(|E_{1 n}(\hat{\zeta}_n^\circ,\hat{\theta}_n^{L_2})|>\varepsilon/3)+\varepsilon/3+\varepsilon/3\\
&\leq& P\left(\sup_{g\in\mathcal{N}(\Omega,\rho_0),\theta\in U, \|g-\zeta\|_{L_2(\Omega)}\leq \delta_0}|E_{1 n}(g,\theta)|>\varepsilon/3\right)+\varepsilon/3+\varepsilon/3\\
&\leq& \varepsilon,
\end{eqnarray*}
where the first and the third inequalities follows from the definition of $\hat{\zeta}_n^\circ$; the second inequality follows from B2; the last inequality follows from (\ref{equicontinuous}). This implies that $E_{1 n}(\hat{\zeta}_n,\hat{\theta}_n^{L_2})$ tends to zero in probability.
Thus we have
\begin{eqnarray*}
o_p(1)&=&E_{1 n}(\hat{\zeta}_n,\hat{\theta}_n^{L_2})=n^{-1/2}\sum_{i=1}^n\{\hat{\zeta}_n(x_i)-\zeta(x_i)\}\frac{\partial y^s}{\partial \theta_j}(x_i,\hat{\theta}_n^{L_2})\\&&-n^{1/2}\int_\Omega\{\hat{\zeta}(z)-\zeta(z)\}\frac{\partial y^s}{\partial \theta_j}(z,\hat{\theta}_n^{L_2})d z\\
&=&n^{1/2}C_n-n^{1/2}\int_\Omega\{\hat{\zeta}(z)-\zeta(z)\}\frac{\partial y^s}{\partial \theta_j}(z,\hat{\theta}_n^{L_2})d z,
\end{eqnarray*}
which implies
\begin{eqnarray}
C_n=\int_{\Omega}\{\hat{\zeta}_n(z)-\zeta(z)\}\frac{\partial y^s}{\partial \theta_j}(z,\hat{\theta}_n^{L_2}) d z+o_p(n^{-1/2}).\label{Cn0}
\end{eqnarray}
By substituting (\ref{eq theta}) to (\ref{Cn0}) and using A2, we can apply the Taylor expansion to (\ref{Cn0}) at $\theta^*$ and obtain
\begin{eqnarray}
C_n&=&\int_{\Omega}\{y^s(z,\hat{\theta}_n^{L_2})-\zeta(z)\}\frac{\partial y^s}{\partial \theta_j}(z,\hat{\theta}_n^{L_2}) d z+o_p(n^{-1/2})\nonumber\\
&=&\left\{\frac{1}{2}\int_{\Omega}\frac{\partial^2}{\partial\theta^{T}\partial \theta_j}\left(y^s(z,\tilde{\theta}_n)-\zeta(z)\right)^2 d z\right\} (\hat{\theta}_n^{L_2}-\theta^*)\nonumber\\&&+o_p(n^{-1/2}),\label{Cn}
\end{eqnarray}
where $\tilde{\theta}_n$ lies between $\hat{\theta}_n$ and $\theta^*$. By the consistency of $\hat{\theta}_n^{L_2}$, we have $\tilde{\theta}_n\operatorname*{\rightarrow}\limits^p\theta^*$. This implies that
\begin{eqnarray}
&&\int_{\Omega}\frac{\partial^2}{\partial\theta^{T}\partial \theta}\left(y^s(z,\tilde{\theta}_n)-\zeta(z)\right)^2 d z\nonumber\\&\operatorname*{\rightarrow}\limits^p& \int_{\Omega}\frac{\partial^2}{\partial\theta^{T}\partial \theta}\left(y^s(z,\theta^*)-\zeta(z)\right)^2 d z=V.\label{Vn}
\end{eqnarray}

Now we consider $D_n$. Define the empirical process
\begin{eqnarray*}
&&E_{2 n}(\theta)\\&=&n^{-1/2}\sum_{i=1}^n\left\{e_i\frac{\partial y^s}{\partial \theta_j}(x_i,\theta)-e_i\frac{\partial y^s}{\partial \theta_j}(x_i,\theta^*)-E\left[e_i\frac{\partial y^s}{\partial \theta_j}(x_i,\theta)-e_i\frac{\partial y^s}{\partial \theta_j}(x_i,\theta^*)\right]\right\}\\&=&n^{-1/2}\sum_{i=1}^n\left\{ e_i\frac{\partial y^s}{\partial \theta_j}(x_i,\theta)- e_i\frac{\partial y^s}{\partial \theta_j}(x_i,\theta^*)\right\},
\end{eqnarray*}
where $\theta\in U$. By A5, the set $\{f_\theta\in C(\mathbf{R}\times\Omega):f_\theta(e,x)=e\frac{\partial y^s}{\partial \theta_j}(x,\theta)- e\frac{\partial y^s}{\partial \theta_j}(x,\theta^*),\theta\in U\}$ is a Donsker class. This ensures that $E_{2 n}(\cdot)$ weakly converges in $L_\infty(U)$ to a tight Guassian process, denoted by $G(\cdot)$. Without loss of generality, we assume that $G(\cdot)$ has continuous sample paths. We note that $G(\theta^*)=0$ because $E_{2 n}(\theta^*)$ for all $n$. Then as a consequence of the consistency of $\hat{\theta}^{L_2}_n$ and the continuous mapping theorem \citep{van2000asymptotic}, $E_{2 n}(\hat{\theta}_n^{L_2})\operatorname*{\rightarrow}\limits^p G(\theta^*)=0$, which gives
\begin{eqnarray}
D_n=\frac{1}{n}\sum_{i=1}^n e_i\frac{\partial y^s}{\partial \theta_j}(x_i,\theta^*)+o_p(n^{-1/2}).\label{Dn}
\end{eqnarray}

Finally we estimate $C4$. By A5, B2 and B3,
By C2, C3 and C4,
\begin{eqnarray}
E_n\leq \lambda_n\|\hat{\zeta}\|_{\mathcal{N}_\Phi(\Omega)}\Big\|\frac{\partial y^s}{\partial \theta_j}(\cdot,\hat{\theta})\Big\|_{\mathcal{N}_\Phi(\Omega)}=o_p(n^{-1/2}).\label{En}
\end{eqnarray}

By combining (\ref{zero}), (\ref{Cn}), (\ref{Vn}), (\ref{Dn}) and (\ref{En}), we prove the desired result.
\end{proof}

Theorem \ref{Th AN} implies the asymptotic normality of $\sqrt{n}(\hat{\theta}^{L_2}_n-\theta^*)$, provided that
\begin{eqnarray}
W:=E\left[\frac{\partial y^s}{\partial \theta}(x_i,\theta^*)\frac{\partial y^s}{\partial \theta^\text{T}}(x_i,\theta^*)\right]\label{W}
\end{eqnarray}
is positive definite. Specifically,
\begin{eqnarray}
\sqrt{n}(\hat{\theta}^{L_2}_n-\theta^*)\operatorname*{\rightarrow}^d N(0,4\sigma^2 V^{-1}WV^{-1}).\label{L2asymvariance}
\end{eqnarray}

\subsection{Semiparametric Efficiency}
In this section, we discuss the efficiency of the proposed $L_2$ calibration. It will be shown that, as a semiparametric method, the $L_2$ calibration method reaches the highest possible efficiency if the measurement errors follow a normal distribution.

In statistics, a parametric model is one whose parameter space is finite dimensional, while a nonparametric model is one with an infinite dimensional parameter space. The definition of semiparametric models is, nevertheless, more complicated. Refer to \cite{groeneboom1992information,bickel1993efficient} for details. In simple terms, a semiparametric problem has an infinite dimensional parameter space but the parameter of interest in this problem is only finite dimensional. The calibration problems under consideration are semiparametric. To see this, consider the calibration model given by (\ref{lpls}) and (\ref{l2projection}). The parameter space of model (\ref{lpls}) contains an infinite dimensional function space which covers $\zeta$. On the other hand, the parameter of interest is $\theta^*$ in (\ref{l2projection}), which is $q$-dimensional.

Now we briefly review the estimation efficiency in semiparametric problems. For details, we refer to \cite{bickel1993efficient,kosorok2008introduction}. Let $\Xi$ be an infinite dimensional parameter space whose true value is denoted by $\xi_0$. Denote the feature of interest as $\nu(\xi_0)$ with a known map $\nu:\Xi\mapsto\mathbf{R}^d$.
Suppose $T_n$ is an estimator for $\nu(\xi_0)$ based on $n$ independent samples and that $\sqrt{n}(T_n-\nu(\xi_0))$ is asymptotically normal.
Now let $\Xi_0$ be an arbitrary finite dimensional subset of $\Xi$ satisfying $\xi_0\in\Xi_0$. We consider the statistical estimation problem with the same observed data but with the parameter space $\Xi_0$. Under this parametric assumption and some other regularity conditions, an efficient estimator can be obtained by using the maximum likelihood (ML) method, denoted by $S_n^{\Xi_0}$. Since the construction of $S_n^{\Xi_0}$ uses more assumptions than $T_n$, the asymptotic variance of $S_n^{\Xi_0}$ should be less than or equal to that of $T_n$. We call $T_n$ \textit{semiparametric efficient} if there exists a $\Xi_0$ such that $S_n^{\Xi_0}$ has the same asymptotic variance as $T_n$.



For the calibration problem given by (\ref{lpls}) and (\ref{l2projection}),
consider the following $q$-dimensional parametric model indexed by $\gamma$:
\begin{eqnarray}
\zeta_\gamma(\cdot)=\zeta(\cdot)+\gamma^\text{T}\frac{\partial y^s}{\partial \theta}(\cdot,\theta^*),\label{gamma}
\end{eqnarray}
with $\gamma\in\mathbf{R}^q$. Then (\ref{lpls}) and (\ref{gamma}) form a linear regression model. Regarding (\ref{lpls}), the true value of $\gamma$ is $\gamma_0=0$. Suppose that $e_i$ in (\ref{lpls}) follows $N(0,\sigma^2)$ with an unknown $\sigma^2$. Under the regularity conditions of Theorem \ref{Th AN}, the ML estimator for observations $\{(x_i,y_i)\}_{i=1}^n$ is the least squares estimator, with the asymptotic expression
\begin{eqnarray}
\hat{\gamma}_n=\frac{1}{n}W^{-1}\sum_{i=1}^n e_i\frac{\partial y^s}{\partial \theta}(x_i,\theta^*)+o_p(n^{-1/2}),\label{MLE}
\end{eqnarray}
where $W$ is defined in (\ref{W}). Then a natural estimator for $\theta^*$ in (\ref{l2projection}) is
\begin{eqnarray}
\hat{\theta}_n=\operatorname*{argmin}_{\theta\in\Theta}\|\zeta_{\hat{\gamma}_n}(\cdot)-y^s(\cdot,\theta)\|_{L_2(\Omega)}.\label{parametric}
\end{eqnarray}
Again, we simplify the problem in (\ref{parametric}) by assuming that $y^s$ is a known function. The asymptotic variance of $\hat{\theta}_n$ can be obtained by the delta method. As in A1, assume that $x_i$ follows the uniform distribution over $\Omega$. Then $\|f\|_{L_2(\Omega)}^2=E f^2(x_i)$ for all $f$. Define
\begin{eqnarray}
\theta(t)=\operatorname*{argmin}_{\theta\in\Theta}E[\zeta_{t}(x_i)-y^s(x_i,\theta)]^2,\label{thetat}
\end{eqnarray}
for each $t$ near 0. Let
\begin{eqnarray*}
&&\Psi(\theta,t)=\frac{\partial}{\partial\theta}E[\zeta_{t}(x_i)-y^s(x_i,\theta)]^2\\
&=& \frac{\partial}{\partial\theta}E\left[\zeta(x_i)-t^\text{T}\frac{\partial y^s}{\partial \theta}(x_i,\theta^*)-y^s(x_i,\theta)\right]^2.
\end{eqnarray*}
Then (\ref{thetat}) implies $\Psi(\theta(t),t)=0$ for all $t$ near 0. From the implicit function theorem, we have
\begin{eqnarray}
&&\frac{\partial \theta(t)}{\partial t^\text{T}}\Big|_{t=0}=-\Big(\frac{\partial \Psi}{\partial \theta^\text{T}}(\theta^*,0)\Big)^{-1}\frac{\partial \Psi}{\partial t^\text{T}}(\theta^*,0)\nonumber\\
&=&-\Big(E\frac{\partial^2}{\partial \theta\partial\theta^\text{T}}[\zeta(x_i)-y^s(x_i,\theta^*)]^2\Big)^{-1}2 E\left[\frac{\partial y^s}{\partial \theta^\text{T}}(x_i,\theta^*)\frac{\partial y^s}{\partial \theta}(x_i,\theta^*)\right]\nonumber\\&=&-2V^{-1}W.\label{partial}
\end{eqnarray}
By the delta method,
\begin{eqnarray}
\hat{\theta}_n-\theta^*=\theta(\hat{\gamma}_n)-\theta(0)=\frac{\partial \theta(t)}{\partial t^\text{T}}\Big|_{t=0}\hat{\gamma}_n+o_p(n^{-1/2}),
\end{eqnarray}
which, together with (\ref{MLE}) and (\ref{partial}), yields
\begin{eqnarray}
\hat{\theta}_n-\theta^*=-2 V^{-1}\sum_{i=1}^n e_i\frac{\partial y^s}{\partial \theta}(x_i,\theta^*)+o_p(n^{-1/2}).\label{hard}
\end{eqnarray}
Noting that the asymptotic expression of the $L_2$ calibration given by (\ref{normality}) has the same form as the ML estimator for the parametric model in (\ref{hard}), we obtain the following theorem.

\begin{theorem}\label{Th efficiency}
Under the assumptions of Theorem \ref{Th AN}, if $e_i$ in (\ref{lpls}) follows a normal distribution, then the $L_2$ calibration (\ref{L2stochastic}) is semiparametric efficient.
\end{theorem}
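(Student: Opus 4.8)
The plan is to verify the definition of semiparametric efficiency given earlier \emph{directly}, by exhibiting a finite-dimensional submodel through the true $\zeta$ whose maximum-likelihood-based estimator of $\theta^*$ has asymptotic variance exactly $4\sigma^2 V^{-1}WV^{-1}$, i.e.\ the variance of $\hat{\theta}_n^{L_2}$ recorded in (\ref{L2asymvariance}). The natural candidate is the $q$-dimensional family $\{\zeta_\gamma:\gamma\in\mathbf{R}^q\}$ of (\ref{gamma}), which passes through $\zeta$ at $\gamma_0=0$ and whose score direction at $\gamma_0$ is $\frac{\partial y^s}{\partial\theta}(\cdot,\theta^*)$, precisely the function driving the first-order expansion (\ref{normality}). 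Combined with (\ref{lpls}) and the normality of the $e_i$, this submodel is a Gaussian linear regression with nuisance parameter $\sigma^2$, so its ML estimator of $\gamma$ is ordinary least squares and admits the expansion (\ref{MLE}).

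First I would record the routine facts that make the parametric comparison legitimate: under A1--A5 and normality the submodel (\ref{lpls})+(\ref{gamma}) is regular, so $\hat{\gamma}_n$ is asymptotically efficient for $\gamma$ with the stated expansion; and the induced map $\gamma\mapsto\theta(\gamma)$, with $\theta(\cdot)$ as in (\ref{thetat}), is well defined and $C^1$ near $0$. Here A2 (uniqueness and interiority of $\theta^*$) together with A4 ($V$ invertible) let me apply the implicit function theorem to $\Psi(\theta,t)=0$, exactly as in the derivation of (\ref{partial}), to obtain $\frac{\partial\theta(t)}{\partial t^\text{T}}\big|_{t=0}=-2V^{-1}W$. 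Because $W$ in (\ref{W}) is positive definite (the hypothesis already invoked for (\ref{L2asymvariance})), this Jacobian is nonsingular, so $\gamma\leftrightarrow\theta$ is a local diffeomorphism and, by invariance of maximum likelihood estimation under smooth reparametrization, the estimator $\hat{\theta}_n=\theta(\hat{\gamma}_n)$ of (\ref{parametric}) is the MLE of $\theta^*$ within this submodel.

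Next I would push $\hat{\gamma}_n$ through $\theta(\cdot)$ by the delta method: $\hat{\theta}_n-\theta^*=\big(\frac{\partial\theta}{\partial t^\text{T}}\big|_{0}\big)\hat{\gamma}_n+o_p(n^{-1/2})$, which upon substituting (\ref{MLE}) and (\ref{partial}) yields (\ref{hard}), namely $\hat{\theta}_n-\theta^*=-2V^{-1}\{n^{-1}\sum_i e_i\frac{\partial y^s}{\partial\theta}(x_i,\theta^*)\}+o_p(n^{-1/2})$. This is term-by-term identical to the expansion (\ref{normality}) of the $L_2$ calibration established in Theorem \ref{Th AN}, so the two estimators share the asymptotic variance $4\sigma^2 V^{-1}WV^{-1}$. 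Since the submodel MLE is efficient within the submodel and attains exactly this variance, the definition of semiparametric efficiency is satisfied and the theorem follows.

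The conceptual heart of the argument --- that the $L_2$ calibration's stochastic expansion coincides with that of a submodel MLE --- has essentially been carried out already in deriving (\ref{normality}) and (\ref{hard}); what remains for a careful write-up is bookkeeping: (i) checking that (\ref{gamma}) is a regular parametric submodel so that the parametric efficiency bound and the ML asymptotics genuinely apply; (ii) justifying the invariance step that makes $\theta(\hat{\gamma}_n)$ the MLE of $\theta^*$, which is where positive definiteness of $W$ is used; and (iii) the implicit-function and delta-method regularity for $\theta(t)$. The main obstacle is making (i)--(iii) airtight rather than producing any new estimate; in particular, if $W$ were only nonnegative definite the reparametrization argument would break and one would have to argue efficiency through the form of the influence function directly.
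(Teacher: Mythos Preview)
Your proposal is correct and follows essentially the same route as the paper: exhibit the $q$-dimensional submodel (\ref{gamma}), observe that under normality the ML estimator of $\gamma$ is least squares with expansion (\ref{MLE}), push it through $\theta(\cdot)$ via the implicit function theorem and delta method to obtain (\ref{hard}), and match this with (\ref{normality}). Your write-up is slightly more careful than the paper's in flagging the invariance-of-MLE step and the role of $W$ being positive definite, but the argument is the same.
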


Since the normal distribution is commonly used to model the random error in physical experiments (see e.g. \citealp{wu2011experiments}) and the calibration for computer experiments (see e.g. \citealp{kennedy2001bayesian}). Theorem \ref{Th efficiency} suggests that the proposed method is efficient for many practical problems.
For non-normal error distributions, the ML estimator does not agree with the least squares estimator. Thus the ML estimator cannot be expressed by (\ref{MLE}). Consequently, the $L_2$ calibration defined by (\ref{smoothingn}) and (\ref{L2stochastic}) is not semiparametric efficient. However, if the random error is from a parametric model, the proposed $L_2$ calibration can be modified to achieve the semiparametric efficiency. Denote the likelihood function of $e_i$ by $l(\beta;e_i)$ with $\beta\in \mathcal{B}$, i.e, $e_i$ has a density $l(\beta_0;\cdot)$ for some unknown $\beta_0\in\mathcal{B}$. Suppose $L(\cdot;x):=\log l(\cdot;x)$ is convex for all $x$. Then the penalized ML estimator for $\zeta$ is
\begin{eqnarray}
\hat{\zeta}^{ML}:=\operatorname*{argmin}_{\beta\in\mathcal{B},f\in\mathcal{N}_\Phi(\Omega)}\frac{1}{n}\sum_{i=1}^n L(\beta;y_i-f(x_i))+ \lambda\|f\|^2_{\mathcal{N}_\Phi(\Omega)}.
\end{eqnarray}
Then define the modified $L_2$ calibration as
\begin{eqnarray}
\hat{\theta}^{ML}:=\operatorname*{argmin}_{\theta\in\Theta}\|\hat{\zeta}^{ML}(\cdot)-\hat{y}^s(\cdot,\theta)\|_{L_2(\Omega)}.
\end{eqnarray}
By using similar arguments, it can be proved that, under some regularity conditions, $\hat{\theta}^{ML}$ is semiparametric efficient. For a related discussion, we refer to \cite{shen1997methods}.

\section{Ordinary Least Squares}
In this section, we will study an alternative method, namely, the ordinary least squares (OLS) calibration. There are several versions of the OLS method discussed in statistics and applied mathematics for calibration problems and inverse problems \citep[e.g.,][]{joseph2009statistical,evans2002inverse}. Here we consider a general form, which is apparently new but covers the existing versions. As before, let $\hat{y}^s_n$ be a sequence of surrogate models for $y^s$.
Define the OLS estimator for the calibration parameter as
\begin{eqnarray}
\hat{\theta}^{OLS}_n=\operatorname*{argmin}_{\theta\in\Theta}\sum_{i=1}^n\left(y^p_i-\hat{y}^s_n(x_i,\theta)\right)^2,\label{OLS}
\end{eqnarray}
where $x_i$'s and $y_i$'s are from the model (\ref{lpls}).

Obviously, the OLS calibration is a natural choice when there is no difference between the true process and the optimal computer output, i.e., $\zeta(\cdot)=y^s(\cdot,\theta^*)$. However, we are particularly interested in the asymptotic behavior of the OLS calibration when $\zeta(\cdot)$ and $y^s(\cdot,\theta^*)$ are different.

Analogous to Theorem \ref{Th AN} for the $L_2$ calibration, we have the following theorem on the asymptotic behavior of the OLS calibration.


\begin{theorem}\label{Th OLS}
In addition to conditions A1-A4 and C1-C2, suppose that there exists a neighborhood $U$ of $\theta^*$, such that $y^s(x,\cdot)\in C^{2,1}(U)$ for all $x\in\Omega$, where $C^{2,1}$ denotes the space of functions whose second derivatives are Lipschitz. Then
\begin{eqnarray*}
\hat{\theta}^{OLS}_n-\theta^*=V^{-1}\left\{\frac{1}{n}\sum_{i=1}^n\frac{\partial}{\partial\theta}(y^p_i-y^s(x_i,\theta^*))^2\right\}+o_p(n^{-1/2}).
\end{eqnarray*}
\end{theorem}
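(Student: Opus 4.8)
\emph{Proof plan.} The plan is to treat $\hat\theta^{OLS}_n$ as a standard $M$-estimator ($Z$-estimator) and run the classical first-order-condition / delta-method argument. The key simplification relative to Theorem~\ref{Th AN} is that there is no nonparametric pre-estimator to control: the only approximating object is the emulator $\hat y^s_n$, which by C1-C2 is within $o_p(n^{-1/2})$ of $y^s$ in $L_\infty(\Omega\times\Theta)$, so the Donsker-class machinery of Theorem~\ref{Th AN} is not needed. There are three ingredients: consistency of $\hat\theta^{OLS}_n$; reduction of the estimating equation from $\hat y^s_n$ to $y^s$; and a mean-value linearization around $\theta^*$.

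First I would establish $\hat\theta^{OLS}_n\operatorname*{\rightarrow}\limits^{p}\theta^*$. Put $M_n(\theta)=\frac1n\sum_{i=1}^n(y^p_i-\hat y^s_n(x_i,\theta))^2$. Discarding the emulator error uniformly in $\theta$ via C1, and using the decomposition $(\zeta(x_i)+e_i-y^s(x_i,\theta))^2=(\zeta(x_i)-y^s(x_i,\theta))^2+2e_i(\zeta(x_i)-y^s(x_i,\theta))+e_i^2$, a uniform law of large numbers (using A1, A3 and continuity of $y^s$) gives $M_n(\theta)\operatorname*{\rightarrow}\limits^{p}M(\theta):=E[(\zeta(x_1)-y^s(x_1,\theta))^2]+\sigma^2$ uniformly on $\Theta$, the cross term vanishing uniformly and $\frac1n\sum_{i=1}^n e_i^2\operatorname*{\rightarrow}\limits^{p}\sigma^2$. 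By (\ref{l2projection}) and the uniqueness in A2, $M$ is uniquely minimized at $\theta^*$, so a standard argmin consistency argument \citep{van2000asymptotic} yields $\hat\theta^{OLS}_n\operatorname*{\rightarrow}\limits^{p}\theta^*$.

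Since $\theta^*$ is interior (A2), with probability tending to one $\hat\theta^{OLS}_n$ lies in the neighborhood $U$ on which $y^s(x,\cdot)\in C^{2,1}(U)$ and is a stationary point of $M_n$. The estimating equation $\frac1n\sum_{i=1}^n\frac{\partial}{\partial\theta}(y^p_i-\hat y^s_n(x_i,\theta))^2\big|_{\theta=\hat\theta^{OLS}_n}=0$ is converted to $G_n(\hat\theta^{OLS}_n)=o_p(n^{-1/2})$, where $G_n(\theta):=\frac1n\sum_{i=1}^n\frac{\partial}{\partial\theta}(y^p_i-y^s(x_i,\theta))^2$, by expanding the difference of the two gradients into the terms $(y^p_i-\hat y^s_n)\big(\frac{\partial\hat y^s_n}{\partial\theta}-\frac{\partial y^s}{\partial\theta}\big)$ and $(y^s-\hat y^s_n)\frac{\partial y^s}{\partial\theta}$, whose averages over $i$ on $U$ are, by C1-C2, of order $o_p(n^{-1/2})$ times an $O_p(1)$ factor; here one uses $\frac1n\sum_{i=1}^n|e_i|=O_p(1)$ (finite variance, A1). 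A mean-value expansion then gives $G_n(\hat\theta^{OLS}_n)=G_n(\theta^*)+\bar H_n(\hat\theta^{OLS}_n-\theta^*)$, with $\bar H_n=\frac1n\sum_{i=1}^n\int_0^1\frac{\partial^2}{\partial\theta\partial\theta^\text{T}}(y^p_i-y^s(x_i,\theta^*+t(\hat\theta^{OLS}_n-\theta^*)))^2\,dt$. Because the $e_i$-cross term in the Hessian has mean zero, the law of large numbers at $\theta^*$, the Lipschitz continuity of the second derivative ($y^s\in C^{2,1}(U)$) and $\hat\theta^{OLS}_n\operatorname*{\rightarrow}\limits^{p}\theta^*$ give $\bar H_n\operatorname*{\rightarrow}\limits^{p}V$, which is invertible by A4; and since $\theta^*$ solves (\ref{l2projection}) and $Ee_i=0$, $G_n(\theta^*)$ is a mean-zero i.i.d.\ average, hence $O_p(n^{-1/2})$. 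Solving the displayed identity for $\hat\theta^{OLS}_n-\theta^*$, using $\bar H_n^{-1}\operatorname*{\rightarrow}\limits^{p}V^{-1}$ and collecting lower-order pieces into $o_p(n^{-1/2})$, produces the asserted expansion.

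The main obstacle is the emulator substitution in the second step: one must keep the accuracy at $o_p(n^{-1/2})$ even though the responses $y^p_i$ carry the unbounded errors $e_i$, which is exactly where C1-C2 together with the finite variance of $e_i$ do the work; the remaining steps are a routine $M$-estimation computation. As a remark linking to the paper's theme, $\frac{\partial}{\partial\theta}(y^p_i-y^s(x_i,\theta^*))^2=-2(\zeta(x_i)-y^s(x_i,\theta^*))\frac{\partial y^s}{\partial\theta}(x_i,\theta^*)-2e_i\frac{\partial y^s}{\partial\theta}(x_i,\theta^*)$, so the leading term of $\hat\theta^{OLS}_n-\theta^*$ carries, in addition to the noise average $\frac1n\sum_{i=1}^n e_i\frac{\partial y^s}{\partial\theta}(x_i,\theta^*)$ that also appears in Theorem~\ref{Th AN}, the extra mean-zero average $\frac1n\sum_{i=1}^n(\zeta(x_i)-y^s(x_i,\theta^*))\frac{\partial y^s}{\partial\theta}(x_i,\theta^*)$; the variance of this additional term is nonzero whenever $\zeta\not\equiv y^s(\cdot,\theta^*)$, which is precisely why the OLS calibration, though consistent, is not efficient.
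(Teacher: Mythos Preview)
Your proposal is correct and follows essentially the same route as the paper's own proof: consistency via a uniform law of large numbers after discarding the emulator error through C1, reduction of the first-order condition from $\hat y^s_n$ to $y^s$ via C1--C2, and then the standard $Z$-estimator linearization (the paper simply cites \cite{van2000asymptotic} for this last step, whereas you spell out the mean-value expansion and the convergence $\bar H_n\operatorname*{\rightarrow}\limits^{p}V$ using the $C^{2,1}$ assumption). Your closing remark decomposing the score into the noise term and the discrepancy term is also in the spirit of the discussion immediately following the theorem in the paper.
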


\begin{proof}
First we prove $\hat{\theta}^{OLS}_n\operatorname*{\rightarrow}\limits^p \theta^*$. By condition A2, it suffices to show that
\begin{eqnarray}
\sup_{\theta\in\Theta}\left|\frac{1}{n}\sum_{i=1}^n\left(y^p_i-\hat{y}^s_n(x_i,\theta)\right)^2- (\|\zeta(\cdot)-y^s(\cdot,\theta)\|^2_{L_2(\Omega)}+\sigma^2)\right|\operatorname*{\rightarrow}\limits^p 0.\label{LLN}
\end{eqnarray}
Note that
\begin{eqnarray}
&&\left|\frac{1}{n}\sum_{i=1}^n\left(y^p_i-\hat{y}^s_n(x_i,\theta)\right)^2-\frac{1}{n}\sum_{i=1}^n\left(y^p_i-y^s(x_i,\theta)\right)^2\right| \nonumber\\
&=&\left|\frac{1}{n}\sum_{i=1}^n(y^s(x_i,\theta)-\hat{y}^s(x_i,\theta))(2 y^p_i-y^s(x_i,\theta)-\hat{y}^s(x_i,\theta))\right|\nonumber\\
&\leq&\|y^s-\hat{y}^s\|_{L_\infty(\Omega)}\left(\frac{1}{n}\sum_{i=1}^n 2 (y^p_i-y^s(x_i,\theta))+\|y^s-\hat{y}^s\|_{L_\infty(\Omega)}\right) \label{distanceemulator}
\end{eqnarray}
Since $\Theta$ is compact, the uniform law of large numbers \citep{van1996weak} implies
\begin{eqnarray}
\sup_{\theta\in\Theta}\left|\frac{1}{n}\sum_{i=1}^n (y^p_i-y^s(x_i,\theta))-E[y^p_i-y^s(x_i,\theta)]\right|\operatorname*{\rightarrow}\limits^p 0,\label{uniformconv1}
\end{eqnarray}
and
\begin{eqnarray}
\sup_{\theta\in\Theta}\left|\frac{1}{n}\sum_{i=1}^n (y^p_i-y^s(x_i,\theta))^2-E[y^p_i-y^s(x_i,\theta)]^2\right|\operatorname*{\rightarrow}\limits^p 0.\label{uniformconv2}
\end{eqnarray}
Direct calculations give
\begin{eqnarray}
E[y^p_i-y^s(x_i,\theta)]^2=\int_{\Omega}(\zeta(z)-y^s(z,\theta))^2 d z +\sigma^2,\label{moment}
\end{eqnarray}
which, together with (\ref{distanceemulator}), (\ref{uniformconv1}), (\ref{uniformconv2}) and (\ref{moment}), proves (\ref{LLN}).

By the definition (\ref{OLS}), condition A2 and the consistency of $\hat{\theta}_n^{OLS}$, we have
\begin{eqnarray*}
0&=&\frac{\partial}{\partial \theta}\left\{\frac{1}{n}\sum_{i=1}^n\left(y^p_i-\hat{y}^s_n(x_i,\hat{\theta}_n^{OLS})\right)^2\right\}\\
&=&\frac{2}{n}\sum_{i=1}^n\frac{\partial\hat{y}^s_n}{\partial \theta}(x_i,\hat{\theta}_n^{OLS})\left\{\hat{y}^s_n(x_i,\hat{\theta}_n^{OLS})-y_i^p\right\},
\end{eqnarray*}
which, together with the law of large numbers and conditions C1 and C2, yields
\begin{eqnarray*}
o_p(n^{-1/2})
&=&\frac{2}{n}\sum_{i=1}^n\frac{\partial y^s}{\partial \theta}(x_i,\hat{\theta}_n^{OLS})\left\{y^s(x_i,\hat{\theta}_n^{OLS})-y_i^p\right\}\nonumber\\
&=&\frac{1}{n}\sum_{i=1}^n\frac{\partial}{\partial \theta}\left(y^p_i-y^s(x_i,\hat{\theta}_n^{OLS})\right)^2.
\end{eqnarray*}
The remainder of the proof follows from some direct calculations using the standard asymptotic theory for Z-estimators \citep{van2000asymptotic}.
\end{proof}

Theorem \ref{Th OLS} shows that the OLS calibration is consistent even if the computer code is imperfect. Compared with the $L_2$ calibration, the OLS calibration is computationally more efficient. Also, the OLS calibration does not require tuning, while in the $L_2$ calibration the value of the tuning parameter $\lambda$ in (\ref{smoothing}) needs to be determined. However, according to Theorem \ref{Th OLS}, the asymptotic variance of the OLS calibration does not reach the semiparametric lower bound given by (\ref{hard}).

We now study the conditions under which the $L_2$ calibration and the OLS calibration are asymptotically equivalent.
Let $\Sigma_1=4\sigma^2 W$. Then the asymptotic variance of the $L_2$ calibration given by (\ref{L2asymvariance}) is $V^{-1}\Sigma_1 V^{-1}$. Let
\begin{eqnarray}
\Sigma_2&=&E\left[\frac{\partial}{\partial\theta}(y^p_i-y^s(x_i,\theta^*))^2\right]^2\nonumber\\
&=&4E\left[(e_i+\zeta(x_i)-y^s(x_i,\theta^*))^2 \frac{\partial y^s}{\partial\theta}(x_i,\theta^*)\frac{\partial y^s}{\partial\theta^\text{T}}(x_i,\theta^*)\right]\nonumber\\
&=&4\sigma^2 W+4E\left[(\zeta(x_i)-y^s(x_i,\theta^*))^2\frac{\partial y^s}{\partial\theta}(x_i,\theta^*)\frac{\partial y^s}{\partial\theta^\text{T}}(x_i,\theta^*)\right].\label{OLSasymvariance}
\end{eqnarray}
Then Theorem \ref{Th OLS} shows that the asymptotic variance for the OLS calibration is $V^{-1}\Sigma_2 V^{-1}$. From (\ref{OLSasymvariance}), it is seen that $\Sigma_2-\Sigma_1\geq 0$. Additionally, $\Sigma_1=\Sigma_2$ if and only if
\begin{eqnarray}
E\left[(\zeta(x_i)-y^s(x_i,\theta^*))^2\frac{\partial y^s}{\partial\theta}(x_i,\theta^*)\frac{\partial y^s}{\partial\theta^\text{T}}(x_i,\theta^*)\right]=0.\label{equcondition}
\end{eqnarray}

Suppose $\frac{\partial y^s}{\partial\theta}(x,\theta^*)\neq 0$ for all $x\in\Omega$. Then (\ref{equcondition}) holds only if $\zeta(x)=y^s(x,\theta^*)$ for almost every $x\in\Omega$, i.e., there exists a perfect computer model. In this case, the OLS calibration has the same asymptotic distribution as the $L_2$ calibration.
However, as suggested by \cite{kennedy2001bayesian}, the bias between $y^s(\cdot,\theta^*)$ and $\zeta(\cdot)$ can be large in practical situations. Thus in general the OLS calibration is less efficient than the $L_2$ calibration.

\section{Numerical Studies}

In this section, we compare the numerical behaviors of three methods for the estimation of the calibration parameters: the $L_2$ calibration, the OLS calibration, and a version of the method proposed by \cite{kennedy2001bayesian}. The original version of the Kennedy-O'Hagan (abbreviated as KO) method is a Bayesian approach. In order to compare with the proposed frequentist methods, we consider the frequentist version of the KO method stated in \cite{tuo2014calibration}, where the maximum likelihood estimation is used.

\subsection{Example 1: perfect computer model}

Suppose the true process is
\begin{eqnarray}
\zeta(x)=\exp(x/10)\sin x,\label{zetanumerical}
\end{eqnarray}
for $x\in\Omega=(0,2\pi)$. The physical observations are given by
\begin{eqnarray}
y^p_i=\zeta(x_i)+e_i,\label{phisicalnumberical}
\end{eqnarray}
with
\begin{eqnarray}
x_i=2\pi i/50, e_i\sim N(0,\sigma^2), \text{ for } i=0,\ldots,50.\label{samplenumerical}
\end{eqnarray}
We will consider two levels of $\sigma^2$ (with $\sigma^2=0.1$ and $\sigma^2=1$) so that the numerical stability of the methods with different noise levels is investigated.

Suppose the computer output is
\begin{eqnarray}
y^s(x,\theta)=\zeta(x)-|\theta+1|(\sin \theta x+\cos\theta x).\label{emulatornumerical}
\end{eqnarray}
Then we have $\zeta(\cdot)=y^s(\cdot,-1)$. Thus $\theta^*=-1$. And there is no discrepancy between $\zeta(\cdot)$ and $y^s(\cdot,\theta^*)$, i.e., the computer model is perfect. For simplicity, we suppose that (\ref{emulatornumerical}) is a known function so that we do not need an emulator for it.

We conducted 1000 random simulations to examine the performance of the $L_2$ calibration, the OLS calibration, and the KO calibration for $\sigma^2=0.1$ and $\sigma^2=1$ respectively. For the $L_2$ calibration and the KO calibration, the Gaussian correlation family $\Phi(x_1,x_2)=\exp\{-\phi(x_1-x_2)^2\}$ is used with the model parameter $\phi$ chosen by the cross-validation method \citep{santner2003design,Rasmussen2006gaussian}. The tuning parameters in the nonparametric regression is selected by the generalized cross validation \citep{wahba1990spline}.

Table \ref{Tab eg1} shows the simulation results. The results for $\sigma^2=0.1$ and $\sigma^2=1$ are given in columns 2-3 and 4-5 respectively. The true values of $\theta^*$ are given in the second row. The last three rows give the mean value and the mean square error (MSE) over 1000 random simulations for the three methods.

\begin{table*}[h]
\caption{Numerical comparison for perfect computer model. MSE = mean square error.}\label{Tab eg1}
\centering
\begin{tabular}{|c|c|c|c|c|}
  \hline
   & \multicolumn{2}{|c|}{$\sigma^2=0.1$} & \multicolumn{2}{|c|}{$\sigma^2=1$} \\
  \hline True Value & \multicolumn{2}{|c|}{-1}  & \multicolumn{2}{|c|}{-1}   \\
  \hline  & Mean & MSE & Mean & MSE \\
  \hline $L_2$ & -0.9990 & $6.497\times 10^{-5}$ & -0.8876 & 0.0906 \\
  \hline OLS & -0.9999 & $1.160\times 10^{-4}$ & -0.9306 & 0.0908 \\
  \hline KO & -0.9993 & $8.065\times 10^{-5}$ & -0.9325 & 0.0468 \\
  \hline
\end{tabular}
\end{table*}

It can be seen from Table \ref{Tab eg1} that all three methods give good estimation results in this example. The good performance of the KO method is not surprising because the computer model here is perfect. In their theoretical study on the KO method with deterministic physical experiments, \cite{tuo2014calibration} obtained the limiting value of the KO method under certain conditions. Using Theorem 1 of \cite{tuo2014calibration}, it can be seen that, for deterministic physical experiments, the Kennedy-O'Hagan method would be consistent if the computer model is perfect. The simulation results in this example suggest that this statement may also hold for stochastic physical systems.

\subsection{Example 2: imperfect computer model}

Now we consider an example with an imperfect computer model. Suppose the true process and the physical observations are the same as in Example 1, given by (\ref{zetanumerical}), (\ref{phisicalnumberical}), and (\ref{samplenumerical}). Suppose the computer model is
\begin{eqnarray}
y^s(x,\theta)=\zeta(x)-\sqrt{\theta^2-\theta+1}(\sin \theta x+\cos\theta x).\label{ysimperfect}
\end{eqnarray}
As in Example 1, we suppose $y^s$ is known. From (\ref{ysimperfect}), it can be seen that there does not exist a real number $\theta$ satisfying $y^s(\cdot,\theta)=\zeta(\cdot)$, because the quadratic function $\theta^2-\theta+1$ is always positive. Thus, this computer model is imperfect.

The $L_2$ discrepancy between the computer model and the physical model has an explicit form:
\begin{eqnarray}
\|\zeta-y^s(\cdot,\theta)\|^2_{L_2(\Omega)}=(\theta^2-\theta+1)\left(2\pi-\frac{\cos(4\pi\theta)-1}{2\theta}\right),\label{discrepancy}
\end{eqnarray}
with a continuous extension at $\theta=0$. Figure \ref{Fig:discrepancy} plots the function (\ref{discrepancy}) with $-2<\theta<2$. Numerical optimization shows that the minimizer of (\ref{discrepancy}) is $\theta^*\approx -0.1789$.

\begin{figure}
  \centering
  \includegraphics[width=0.5\textwidth]{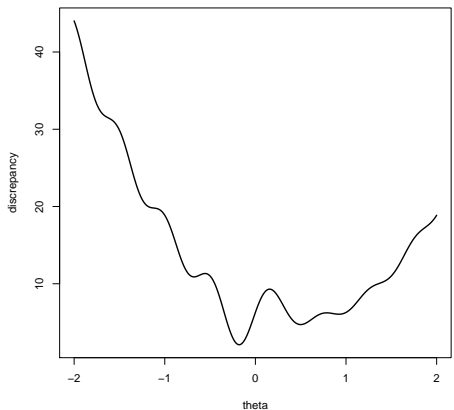}\\
  \caption{$L_2$ discrepancy function in Example 2.}\label{Fig:discrepancy}
\end{figure}

As in Example 1, we conducted 1000 random simulations to compare the $L_2$ calibration, the OLS calibration, and the KO calibration. We keep the remaining setup of this experiment the same as in Example 1. The mean value and standard deviation (SD) over 1000 simulations are shown in Table \ref{Tab eg2}.

\begin{table*}[h]
\caption{Numerical comparison for imperfect computer model. SD = standard deviation.}\label{Tab eg2}
\centering
\begin{tabular}{|c|c|c|c|c|}
  \hline
   & \multicolumn{2}{|c|}{$\sigma^2=0.1$} & \multicolumn{2}{|c|}{$\sigma^2=1$} \\
  \hline True Value & \multicolumn{2}{|c|}{-0.1789}  & \multicolumn{2}{|c|}{-0.1789}   \\
  \hline  & Mean & SD & Mean & SD \\
  \hline $L_2$ & -0.1792 & $2.665\times 10^{-3}$ & -0.1773 & 0.0711 \\
  \hline OLS & -0.1770 & $2.674\times 10^{-3}$ & -0.1684 & 0.1060 \\
  \hline KO & -0.1224 & $7.162\times 10^{-3}$ & 0.0034 & 0.3244 \\
  \hline
\end{tabular}
\end{table*}

It can be seen from Table \ref{Tab eg2} that the $L_2$ calibration and the OLS calibration
outperform the KO calibration. Furthermore, the mean value of
the KO estimator changes a lot as $\sigma^2$ changes. This is undesirable because
a good estimator should not be sensitive to random error for large samples.
Table \ref{Tab eg2} also shows that the standard deviation of the $L_2$ calibration is
smaller than that of the OLS calibration. This agrees with our theoretical
analysis, which shows that the $L_2$ calibration is more efficient than the OLS
calibration for imperfect computer models. Overall, the KO calibration underperforms the $L_2$  calibration or the OLS calibration.

\section{Concluding Remarks and Further Discussions}

In this work, we extend the framework established in \cite{tuo2014calibration} to stochastic physical systems. We propose a novel method, called the $L_2$ calibration, and prove its asymptotic normality and semiparametric efficiency. We also study the OLS method and prove that it is consistent but not efficient.
Although the OLS calibration is computationally less costly, the $L_2$ calibration should be seriously considered because of its high estimation efficiency.
By using a more efficient estimator, fewer physical trials are needed to achieve the same estimation efficiency.
In most practical problems, physical experiments are more expensive to run. Therefore it would be worthwhile to save the physical runs by doing more computation. Thus we recommend using the $L_2$ calibration over the OLS calibration.

Because of the identifiability problem in calibration, we define the purpose of calibration as that of finding the $L_2$ projection, i.e., the parameter value which minimizes the discrepancy between the true process and the computer output under the $L_2$ norm. Noting that the ``true'' value of the calibration parameter in our framework depends on the choice of the norm, one may also consider the asymptotic results for calibration under a different norm. After some calculations, it can be shown that the main results of this work still hold if the new norm is equivalent to the $L_2$ norm. However, if a norm that is not equivalent to the $L_2$ norm, such as the $L_\infty$ norm, is used, the idea in the proof of Theorem 1 will not work. We believe that, for those norms, there do not exist estimators with convergence rate $O(n^{-1/2})$. This will require further work.

We have reported the asymptotic properties of the $L_2$ calibration under the random design, i.e., $x_i$ are sampled independently from the uniform distribution. Given the fact that many physical experiments are conducted under fixed designs (see books by \citealp{box2005statistics} and \citealp{wu2011experiments}), the results for calibration for fixed designs need further investigation.

\appendix

\section*{Acknowledgements}
The authors are grateful to the Associate Editor and the referees for helpful comments.

\bibliographystyle{imsart-nameyear}
\bibliography{calibration}

\end{document}